\newtheorem{definition}{Definition}[section]
\newtheorem{theorem}[definition]{Theorem}
\newtheorem{corollary}[definition]{Corollary}
\newtheorem{claim}[definition]{Claim}
\newtheorem{lemma}[definition]{Lemma}
\newenvironment{proof}{{\bf Proof:} \rm}{\hfill $\square$ \medskip\\}
\date{}
\begin{document}
\title{Combinatorial Auctions with Budgets}
\date{\today}
\author{Amos Fiat\thanks{Tel Aviv University; Research partially supported by a grant from the Israel Academy of Sciences}
\and Stefano Leonardi\thanks{Sapienza University of Rome}
\and Jared Saia \thanks{University of New Mexico;  Research was partially supported by NSF CAREER Award 0644058 and NSF CCR-0313160} \and Piotr Sankowski\footnotemark[2] \thanks{University of Warsaw} }

\begin{titlepage}
\def\thepage{}
\thispagestyle{empty} \maketitle

\maketitle

\begin{quote}
\hspace{0.5cm} {\small\raggedright \noindent {\em ``$\ldots$ any color that he [the customer] wants so long as it is black.''  }}

\hspace{3cm} {\small\raggedleft \noindent {--- Henry Ford, My Life and Work (1922) }}
\end{quote}

%

\begin{quote}
\hspace{0.5cm} {\small\raggedright \noindent {\em ``$\ldots$ Illogical approach to advertising budgets $\ldots$''  }}

\hspace{3cm} {\small \hbox{\begin{tabular}{l} --- Michael Schudson,  Advertising, The Uneasy Persuasion: \\ \qquad Its Dubious Impact on
American Society (1984) \end{tabular}}}
\end{quote}

\addvspace{0.5cm}

\begin{abstract}
We consider budget constrained combinatorial auctions where bidder $i$ has a private value $v_i$ for each of the items in some set $S_i$, agent
$i$ also has a budget constraint $b_i$. The value to agent $i$ of a set of items $R$ is $|R \cap S_i| \cdot v_i$. Such auctions capture adword
auctions, where advertisers offer a bid for those adwords that (hopefully) reach their target  audience, and advertisers also have budgets. It
is known that even if all items are identical and all budgets are public it is not possible to be truthful and efficient. Our main result is a
novel auction that runs in polynomial time, is incentive compatible, and ensures Pareto-optimality. The auction is incentive compatible with
respect to the private valuations, $v_i$, whereas the budgets, $b_i$, and the sets of interest, $S_i$, are assumed to be public knowledge. This
extends the result of Dobzinski et al. \cite{DBLP:conf/focs/DobzinskiLN08,DLNcorrected} for auctions of multiple {\sl identical} items and
public budgets to single-valued {\sl combinatorial} auctions with public budgets.

\end{abstract}

\end{titlepage}
\newpage

\section{Introduction}
In recent years ad auctions have been the subject of some non-negligible attention, perhaps because Internet ad revenue in 2009 was some $\$2.4 \times 10^{10}$ USD\footnote{According to the Interactive Advertising Bureau, www.iab.net.}. Much practical and theoretical work has been done on the issue of ad auctions, much of this work within the general framework of mechanism and auction design. If all advertisers bid for (multiple) copies of a single search term, (so called ``multi unit auction"), then  --- the Vickrey multi unit auction~\cite{vickrey-61} is both truthful and maximizes efficiency.

The Vickrey multi unit auction is not entirely satisfactory, partially because of the following:
 \begin{enumerate} \item Budgets - budgets are a necessary evil because of limited resources and risk aversion. In any real system, budgets are a key component. The Vickrey multi unit auction is not incentive compatible when budgets are allowed. Moreover, even efficiency is ill defined in such a setting (the next best thing is Pareto-optimality).
  \item Not all items are equal, typically, if you want to sell precious metals you probably want to advertise on search terms ``Gold", ``Silver", ``Platinum" and not ``Lead" or ``Corn". If you sell all metals excluding Silver and Platinum then you may want to advertise on search terms ``Gold", ``Uranium", ``Plutonium", and ``Lead".  Multiple parallel multi unit auctions, one for each and every search term, are somewhat problematic and certainly not strategy proof.
  \end{enumerate}

  One real system that addresses both issues is Google's Auction for TV Ads,
deployed a few years ago~\cite{Nisan09}. This auction allows bidders to select shows, times, and days they wish to advertise on; and then give a per-ad impression bid and a total budget. The theoretical analysis of Google's TV ad auction is yet incomplete, but it is known not to be incentive compatible --- strategic bidders can gain by misrepresentation of their valuation (the Google system does not allow one to choose different valuations for different ad slots), even if all other bidder parameters are public.

  Much of the theoretical work on mechanism design has ignored budgets. This may be because budgets mean that utilities are not quasi-linear, the Vickrey-Clarke-Groves (VCG) mechanism is not incentive-compatible, and other curiosities.

 A seminal paper on mechanisms for ad auctions with budgets is by Dobzinski, Lavi and Nisan~\cite{DBLP:conf/focs/DobzinskiLN08,DLNcorrected}. They consider multi unit auctions (all items are identical). E.g., multiple occurrences of the same search word.
  Dobzinski et. {\sl al.} give an incentive-compatible auction (with respect to valuation) that produces a Pareto-optimal allocation. This result holds if one assumes that the budgets are public information and \cite{DBLP:conf/focs/DobzinskiLN08,DLNcorrected} also show that this
  assumption is required: there is no incentive-compatible auction with respect to both valuation and budgets that produces a Pareto-optimal allocation.

  Subsequently, Aggarwal, Muthukrishnan,
Pal and Pal~\cite{unit-demand} considered the case where bidders seek at most one item --- not quite relevant for ad auctions. In this setting they give an incentive compatible auction, with respect to both valuation and budgets. This latter result is related to the paper of Hatfield and Milgrom~\cite{unit-demand2} who consider more general non-quasi-linear utilities. Both \cite{unit-demand} and \cite{unit-demand2} are in a more general combinatorial setting where agents are interested in a given subset of items, or may even can have different valuations for items.

Our work here seeks to map out the frontier of the possible. We give incentive compatible combinatorial auctions with budgets that produce Pareto-optimal allocations, for some not entirely general but also non-trivial class of auctions (the same class considered in Google's TV ad auction). Furthermore, we show that these restrictions cannot be circumvented. Thus, arguably, what we do here is the most that can be done, given that we require that the allocation is Parteo-optimal.

In this paper we study combinatorial auctions of the following {\bf general form}:
\begin{itemize}
\item Every agent (bidder) $1 \leq a \leq n$ has a {\sl publicly known} budget, $b_a\geq 0$, and an
unknown (private) valuation $v_a > 0$;
\item Every agent $a$ is ``interested" in some {\sl publicly known} set of  items, $S_a$.
 We assume that there is at least one agent interested in every
item. Agent $a$ is allocated some (possibly empty) subset of $S_a$.
\item The auction produces an allocation $(M,P)$. $M \subseteq \{1,\ldots,n\} \times \{1, \ldots, m\}$ is a (partial) matching between agents (bidders) and items. $P\in \Re^n$ is a vectors of payments made by the agents. For agent $1\leq a \leq n$, let $M_a$ be the number of items sold to agent $a$ over the
course of the auction and $P_a$ be the total payment made by agent $a$ during the course of the auction. The allocation must obey the following conditions:
\begin{enumerate}
\item The payment by agent $a$, $P_a$, cannot exceed the budget $b_a$.
\item The utility for agent $1 \leq a \leq n$ is $u_a = M_a v_{a} - P_a$.
\item The utility for the auctioneer is $\sum_{j=1}^n P_j$.
\item Bidder-rationality: for all agents $1 \leq a \leq n$, $u_a\geq 0$.
\item Auctioneer-rationality: the utility of the auctioneer, $\sum_{j=1}^n P_j \geq 0$.
\end{enumerate}
\end{itemize}
Note\footnote{In \cite{DLNcorrected} the authors refer to what we call auctioneer rationality by the term ``weakly no positive transfers".} that auctioneer-rationality is implied by  {\sl no positive transfers}: $P_a \geq 0$ for all $1 \leq a \leq n$.

Given valuations, $v_a$, budgets, $b_a$, and sets of interest, $S_a$, we define
  $(M,P)$ to be {\sl Pareto-optimal} if there is no other allocation $(M',P')$ such\footnote{Note that no restrictions are placed on the matching $M'$ or on the payments $P'$.} that
\begin{enumerate}
    \item The utility of every bidder in $(M,P)$ is not less than the utility in $(M',P')$, and
  \item The utility of the auctioneer in $(M,P)$ is not less than the utility in $(M',P')$, and
  \item At least one bidder or the auctioneer is better off in $(M',P')$ compared with $(M,P)$.
\end{enumerate}

An auction is said to be incentive compatible if it is a dominant strategy for all bidders to reveal their true valuation. An auction is said to
be Pareto-optimal if the allocation it produces is Pareto-optimal. An auction is said to make no positive transfers if the allocation it produces has no positive transfers.

 When the sets $S_a$ consist of all items for all agents, {\sl i.e.}, all items are identical, Dobzinski, Lavi, and
Nisan~\cite{DBLP:conf/focs/DobzinskiLN08,DLNcorrected} show that there are no incentive compatible mechanisms that are Pareto-optimal when both
valuations and budgets are private. Furthermore, they also show that a version of Ausubel's dynamic clinching multi-unit
auction~\cite{Ausubel04} is truthful and Pareto-optimal for agents with budgets, {\sl when budgets are public knowledge}.

\section{Our Results}

In this paper we give an incentive compatible and Parteo-optimal combinatorial auction.

Furthermore, our auction makes no positive transfers.

Our result can be viewed as extending the
results of \cite{DBLP:conf/focs/DobzinskiLN08} from selling off multiple {\sl identical} items to a new combinatorial setting where items are distinct and different agents may be interested in different items. In particular, for the non-combinatorial multi unit setting of \cite{DBLP:conf/focs/DobzinskiLN08,DLNcorrected}, our auction and the auction of \cite{DBLP:conf/focs/DobzinskiLN08,DLNcorrected} produce the same allocation. That said, we claim that our version, when restricted to the simpler multi unit setting, is much easier to follow\footnote{Karl Popper would say that this claim cannot be falsified.}.

 Our combinatorial auction is polynomial time and deterministic. Obviously, this cannot be if we were to consider the full generality of
combinatorial auctions. We consider combinatorial auctions were agents have an agent-specific set of interesting items, but only one valuation for any item from that set of interest.

In light of the impossibility results of Dobzinski et al.
\cite{DBLP:conf/focs/DobzinskiLN08} we could not hope to achieve this result with private budgets. We further show that public budgets alone are insufficient for Pareto-optimality and incentive compatibility. We prove that one cannot avoid the restrictions we place on the combinatorial auction setting in the following sense:
\begin{itemize}
\item if budgets are public but the sets of interest and the valuations are
private then no truthful Pareto-optimal auction is possible;
\item if budgets are public and private arbitrary valuations are allowed, no truthful and Pareto-optimal auction is possible (irrespective of computation time). This follows by simple reduction to the previous claim on private sets of interest.
\end{itemize}

In Section \ref{sec:dynclinch} we present our mechanism. It is straightforward to show that the mechanism is truthful with respect to valuations. However, it is not trivial to prove that the mechanism is Pareto optimal. In Section
\ref{sec:pareto-opt} we prove that the allocation produced by the mechanism is in fact Pareto optimal. In Section \ref{sec:impossible} we complement our positive result by showing that with public budgets, private valuations, and private sets of interest, there can be no truthful Pareto optimal mechanism.

\section{Combinatorial Auctions with Budgets via Dynamic Clinching}
\label{sec:dynclinch}

In this section we describe our mechanism in detail.

Our auction can be implemented as a direct revelation mechanism (where the agents reveal their private types to the
mechanism) but may also be viewed as an incentive compatible ascending auction (where incentive compatible means ex-post Nash).  The ascending
auction raises the price of unsold items till all items are clinched.  We describe the mechanism as a direct revelation mechanism and assume
that the private value $\tilde{v}_a$ is equal to the bid $v_a$. The details of the mechanism are presented in Algorithm~\ref{alg:combclinch}, Algorithm~\ref{alg:avoidmatch} and Algorithm~\ref{alg:sell}.

Throughout the algorithm there is always some current price $p$ (initially zero), current number of unsold items, $m$ (initally equal to to
total number of items), and current remaining budgets $b=(b_1, b_2, \ldots, b_n)$,
where $b_a$ is the remaining budget for agent $1 \leq a\leq n$. In addition, the algorithm maintains a boolean vector $H=(H_1, H_2, \ldots, H_n)$.

For every agent $1 \leq i\leq n$ the mechanism makes use of  values $D_i$, $D_i^+$, and $d_i$,  these values are functions of the current
values of $p$, $m$, $b$, and $H$. {\sl I.e.}, whenever one of these values is referenced it is computed based upon the current values of $p$, $m$, $b$, and $H$. Later on, we omit these arguments in the description of the mechanism. Formally:

\begin{eqnarray}
D_i = D_i[p,b_i,m] &=& \left\{\begin{array}{ll}
{\tt min} \{m,\lfloor b_i/p \rfloor\} & {\rm if\ } p\leq v_i  \\
0 & {\rm if\ } p>v_i
\end{array} \right. \label{eq:Diz}\\
 D_i^+ = D_i^+[p,b_i,m] &=&  \lim_{\epsilon \rightarrow 0^+} D_a[p+\epsilon,b_i,m];  \label{eq:Dip}\\
d_i = d_i[p,b_i,m,H] &=& \left\{\begin{array}{ll}
D_i & {\rm if\ } H_i = \rm{True} \\
D_i^+ & {\rm if\ } H_i = \rm{False}
\end{array} \right. \label{eq:dis}
 \end{eqnarray}

$D_i$ is equal to the number of items that agent $i$ is interested in purchasing at current $p$, $m$, and $b$ (Equation \ref{eq:Diz}). In
Equation \ref{eq:Dip} we define $D_i^+$, what is equal to the number of items that agent $a_i$ would be interested in purchasing if the price
were increased by an infinitesimally small amount, thus $D_i^+ \leq D_i$. In Equation (\ref{eq:dis}) we define $d_i$, {\sl the current demand of
agent $i$}, $d_i$ is either equal to $D_i$ or to $D_i^+$, depending on the value of $H_i$.

The algorithm also implicitly keeps a set of unsold items $U$ (those items not yet sold in Algorithm \ref{alg:sell}),  a set of active agents
$A$
---  those with current demand greater than zero, and a set of value limited agents $V$ --- those with valuation equal to the current price:
 \begin{eqnarray}  A &=& \{ 1 \leq  a \leq n | d_a >0\},\label{eq:A} \\
 V &=& \{ 1 \leq a \leq n | d_a>0, v_a = p\}. \label{eq:V} \end{eqnarray}

A key tool used in our auction is that of $S$-avoid matchings. These are maximal matchings that try to avoid, if at all possible, assigning any items to bidders in some set $S$. Such a matching can be computed by computing a min cost max flow, where there is high cost to direct flow through a vertex of $S$.

In general, the auction prefers to sell items only at the last possible moment (alternately phrased, the highest possible price) at which this item can still be sold while still preserving incentive compatibility. The auction will in fact sell all items (Lemma \ref{lem:sells-all}).

Once a price has been updated, the auction checks to see if it must sell items to value limited bidders. Such bidders will receive no real benefit from the item (their valuation is equal to their payment), but this is important so as to increase the utility of the auctioneer. Our definition of Pareto-optimality includes all bidders and the auctioneer. To check if this is indeed the case, the auction computes a $V$-avoid matching, trying to avoid the bidders in $V$. If this cannot be done, then items are sold to these $V$ bidders. After items are sold to value limited bidders, these bidders effectively disappear by setting their $H_a$ values to $\mathrm{False}$.

The main loop of the mechanism checks whether any items must be sold
to any of the currently active bidders. This is where incentive compatibility comes into play. The auction sells an item to some bidder, $a$, at the lowest price where the remaining bidders total demand is such that an item can be assigned to $a$ without creating a shortage. Again, this makes use of the $\{a\}$-avoid matching, if in the $\{a\}$-avoid matching some item is matched to $a$ then $a$ must be sold that item.

If no items can be sold in this manner, the demand of the bidders is reduced by setting $H_a$ to $\mathrm{False}$, for some active bidder $a$. When neither action can be done, the price increases.

The following lemma shows that all items will in fact be sold.



\begin{algorithm}[p]
\begin{algorithmic}[1]
\Procedure{Combinatorial Auction with Budgets}{$v,b,\{S_i\}$} \newline Implicitly defined $D_a$, $D_a^+$, $d_a$, $U$,  $A$, and  $V$ --- see
Equations (\ref{eq:Diz}) -- (\ref{eq:V}).  \newline $B(\neg \{a\})$ - number of items assigned to agents in $A \setminus \{a\}$ in $\{a\}$-avoid
matching

\State{$p\gets 0$} \While{($A\neq \emptyset$)} \State $\forall a\in A: H_a \gets {\rm True}$  \label{aucline:hitrue}

 \State{Sell($V$)} \label{line:SellV}

\State{$\forall a\in V$: $H_a \gets \rm{False}$ \label{line:VHi}}

 \Repeat \label{aucline:repstart}

 \If{$\exists a | B(\neg\{a\})<m$} {Sell($a$)} \label{line:Selli}

 \Else
\State{For arbitrarily $a\in A$ with $H_a = {\rm True}$ set $H_a \gets \rm{False}$} \label{line:Hifalse} \EndIf

\Until{$\forall a\in A$: $(\neg H_a) ~{\tt and}~ (B(\neg\{a\})\geq m)$} \label{aucline:repend}

\State{Increase $p$ until for some $a\in  A$, $D^+_a$ changes (decreases) } \label{aucline:incprice} \EndWhile \EndProcedure
\end{algorithmic} \caption{Combinatorial Auction with Budgets} \label{alg:combclinch}
\end{algorithm}

\begin{algorithm}[p]
\begin{algorithmic}[1]
\Procedure{$S$-Avoid Matching}{}

Construct interest graph $G$: \begin{itemize} \item Active agents, $A$, on left, capacity constraint of agent $a\in A$ = $d_a$ \item Unsold
items, $U$, on right, capacity constraint $1$.
\item Edge $(a,t)$ from agent $a\in A$ to unsold item $t\in U$ iff $t\in S_a$. \end{itemize}

Return maximal $B$-matching with minimal number of items assigned to agents in $S$, amongst all maximal $B$-matchings.

\EndProcedure
\end{algorithmic} \caption{Computing an avoid matching, can be done via min cost max flow} \label{alg:avoidmatch}
\end{algorithm}

\begin{algorithm}[p]
\begin{algorithmic}[1]
\Procedure{Sell}{$S$}

\Repeat

\State Compute $Y = $ {\sc $S$-Avoid Matching}

\State For arbitrary $(a,t)$ in $Y$, $a \in S$, sell item $t$ to agent $a$.

\Until $B(\neg S) \geq m$
 \EndProcedure
\end{algorithmic} \caption{Selling to a set $S$} \label{alg:sell}
\end{algorithm}

\begin{lemma} \label{lem:sells-all}
If every item appears in $\cup_{i=1}^n S_i$ then the auction will sell all items.
\end{lemma}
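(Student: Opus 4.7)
The plan is to proceed by contradiction: assume the auction terminates with $m>0$, and pick an unsold item $t$ and an interested agent $i$ with $t\in S_i$ (guaranteed by the hypothesis). My first step is to establish a matching invariant: every time the inner repeat loop exits, the maximum $B$-matching between active agents (with capacities $d_a$) and unsold items has size exactly $m$. The bound $B\ge m$ follows immediately from the exit condition $B(\neg\{a\})\ge m$ for every $a\in A$, and $B\le m$ is trivial. I would also check that \textsc{Sell} preserves this invariant, since a single sale reduces both $m$ and the buyer's capacity $d_a$ by one, and since the $S$-avoid matching is itself a maximum $B$-matching, dropping its used edge still leaves a matching saturating the remaining unsold items.

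Next I would analyze how the while-condition $A\ne\emptyset$ can first fail. Since $p$ is monotonically increasing and each $D_a^+(p)$ is a nonincreasing integer step function with finitely many breakpoints that eventually reaches $0$, the outer loop terminates in finitely many iterations. Termination with $m>0$ would require a single price-increase step at some critical price $p^\ast$ that drops $D_a^+$ to $0$ simultaneously for every remaining active agent $a$. For each such agent the reason must be either (a) $p^\ast=v_a$, so $a$ becomes value-limited, or (b) $p^\ast$ is a breakpoint of $\lfloor b_a/p\rfloor$ that zeros out $D_a^+$ through budget exhaustion.

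The main obstacle I anticipate is reconciling these two cases with the matching invariant. In case (a), I would argue that value-limited agents are caught by the next outer iteration's \textsc{Sell}($V$) call: because $D_a>0$ at $p^\ast$ keeps them available and the resulting condition $B(\neg V)=0<m$ forces every remaining item onto them at price $p^\ast$. In case (b), I would combine $B=m$ with the previous inner-loop exit condition $B(\neg\{a\})\ge m$ to show that $a$'s contribution to the maximum matching was ``avoidable,'' so the drop in $d_a$ cannot pull $B$ below $m$; in particular some other agent must remain active and the outer loop continues rather than terminating. Combining the two cases over all agents departing at $p^\ast$ yields the desired contradiction, showing that $t$ must in fact have been sold strictly before termination.
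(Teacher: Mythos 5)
Your core idea --- a saturation invariant asserting that the maximum $B$-matching between active agents (with capacities $d_a$) and unsold items has size exactly $m$ --- is precisely the invariant on which the paper's proof rests, but your termination analysis has two genuine gaps. First, the decisive observation of the paper's argument is missing: when the price rises to the first breakpoint $p^\ast$ and the $H_a$ are reset to $\mathrm{True}$ at the top of the next iteration, every agent's demand is numerically unchanged, because $D_a$ evaluated at $p^\ast$ equals $D_a^+$ evaluated at the old price (the price stops at the first point where some $D_a^+$ changes, and $D_a$ keeps its pre-breakpoint value at the breakpoint itself). Hence the full matching valid before the price step is valid verbatim after it, no previously active agent leaves $A$, and the ``simultaneous drop to zero'' scenario you set up cannot occur at all --- no case split is needed. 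Your case (b) substitute for this observation is unsound: the exit condition $B(\neg\{a\})\geq m$ avoids one agent at a time, so from it you cannot conclude that ``some other agent must remain active'' in a scenario where, by hypothesis, every remaining agent's $D_a^+$ drops at once; individual avoidability says nothing about avoiding all such agents simultaneously. (Your case (a) similarly presumes $B(\neg V)=0$, which fails if value-limited and budget-limited agents coexist at $p^\ast$; there the conclusion is recoverable only because each such agent still has $D_a>0$ at $p^\ast$, which is again the demand-preservation fact.)

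Second, your contradiction framework assumes termination with $m>0$ can only arise from the price-increase step, but $A$ could in principle empty earlier in the body --- after Sell($V$) together with the assignment $H_a\gets\mathrm{False}$ for $a\in V$, or after a flip $H_a\gets\mathrm{False}$ inside the repeat loop --- in which case the repeat-until condition is vacuously satisfied, the price step does nothing, and your characterization of the final state is silent. Ruling this out requires exactly the event-by-event checks in the paper's proof: the flips are guarded by $B(\neg V)\geq m$ and $B(\neg\{a\})\geq m$ respectively, so whenever $m>0$ all unsold items can be matched to agents outside the flipped set, which therefore stays nonempty; and each Sell preserves the invariant because the buyer's capacity and $m$ decrease together along a matched edge. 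You verify only the Sell steps; without the guard-condition checks for the $H$-flips and the price-step demand preservation, the invariant is not in force at the moment the while-condition is tested, and the proof does not go through as sketched.
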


\noindent Proof  in Appendix \ref{app:sells-all}.

\section{Pareto-Optimality of the Combinatorial Auction with Budgets}
\label{sec:pareto-opt}

\begin{definition} \label{def:pareto-optimal}
An allocation $(M,P)$ is Pareto-optimal if for no other allocation $(M',P')$ are all players better off, $M'_iv_i - P'_i \ge M_iv_i - P_i$,
including the auctioneer $\sum_i P'(i) \ge \sum_i P_i$, with at least one of the inequalities strict.
\end{definition}

The main goal of this section is to prove the following theorem:

\begin{theorem} \label{thm:main}
The allocation $(M^*,P^*)$ produced by Algorithm \ref{alg:combclinch} is Pareto-optimal. Moreover, the mechanism makes no positive transfers.
\end{theorem}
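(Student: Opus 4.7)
The no-positive-transfers claim is essentially immediate: every payment in Algorithm~\ref{alg:combclinch} is made inside \textsc{Sell} at the current price $p\ge 0$, which starts at zero and only increases, so $P^*_a\ge 0$ for every agent $a$. The real content is Pareto-optimality, which I would prove by contradiction. Suppose $(M',P')$ Pareto-dominates $(M^*,P^*)$. Writing $\Delta_i = M'_i - M^*_i$ and $\delta_i = P'_i - P^*_i$, the dominance conditions give $\delta_i \le v_i\Delta_i$ for every bidder $i$ (strictly whenever $i$ is strictly better off) and $\sum_i \delta_i \ge 0$ for the auctioneer, with at least one of these strict. Combining, $\sum_i v_i\Delta_i \ge 0$, and because Lemma~\ref{lem:sells-all} forces $\sum_i M^*_i = m \ge \sum_i M'_i$, also $\sum_i\Delta_i\le 0$. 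In particular there must exist at least one ``gainer'' $i$ with $\Delta_i>0$ and at least one ``loser'' $j$ with $\Delta_j<0$.

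The next step is to expose the combinatorial structure of the hypothetical improvement. I would form the symmetric difference $M^*\triangle M'$, which on the bipartite interest graph decomposes into a disjoint union of alternating paths and cycles; each component describes a local rerouting that transfers items from losers to gainers. The goal is to show that any such component is incompatible with the execution trace of the auction. Concretely, for each item $t$ transferred from a loser $j$ to a gainer $i$, I would locate the moment the auction clinched $t$ to $j$; this happened on line~\ref{line:Selli} only because the $\{j\}$-avoid matching could not avoid $j$ at the clinching price $p$, which is a strong structural statement about the residual demand and budgets of every other active agent at that time.

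From there the plan is to derive a contradiction by analyzing each transfer through three mutually exhaustive cases: (i) the clinching price $p$ exceeds $v_i$, so $i$ cannot absorb $t$ in $(M',P')$ without violating $\delta_i \le v_i\Delta_i$; (ii) $i$'s remaining budget $b_i - P^*_i$ at that point is below $p$, so any extra allocation to $i$ would violate $P'_i\le b_i$; or (iii) $i$ was already inactive, or $D^+_i=0$ at price $p$, which propagated along the alternating component forces some further gainer to strictly lose value, breaking the welfare bound $\sum_i v_i \Delta_i \ge 0$ by telescoping the per-edge value drops along the path or cycle. Components involving value-limited bidders in $V$ are handled separately: the $V$-saturating \textsc{Sell}($V$) call on line~\ref{line:SellV} guarantees that no $V$-avoid matching can cover all items, so rerouting items away from $V$-bidders in $M'$ strictly reduces the number of items sold and hence the auctioneer's revenue.

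The main obstacle I anticipate is the global coordination of budget, valuation, and matching feasibility. A Pareto improvement need not be a single pairwise swap: it may consist of long alternating structures in $M^*\triangle M'$ together with an arbitrary non-local redistribution of payments. The technical heart of the argument will therefore be a lemma translating the termination invariant ``for every active $a$, $B(\neg\{a\})\ge m$'' of Algorithm~\ref{alg:combclinch} into a Hall-type obstruction: any alternative matching $M'$ achieving weakly higher utility for every bidder within budget must agree with $M^*$ on all transfers that would strictly help somebody, forcing $\sum_i v_i\Delta_i = \sum_i\delta_i = 0$ and ruling out the required strict inequality. Establishing this lemma cleanly, and carefully interleaving the integral matching/avoid-matching structure with the continuous price trajectory and with the value-limited \textsc{Sell}($V$) steps, is where I expect the real work of the proof to lie.
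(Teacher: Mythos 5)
Your treatment of the no-positive-transfers claim and of the preliminaries (all items sold via Lemma~\ref{lem:sells-all}, decomposing $M^*\oplus M'$ into alternating paths and cycles) is consistent with the paper, which also uses exactly this decomposition. But the core of your argument is not actually carried out, and the plan as stated has concrete flaws. First, you defer the ``technical heart'' to an unproven Hall-type lemma; the paper's route is to first prove an exact characterization (Theorem~\ref{thm:paths}): $(M,P)$ is Pareto-optimal iff all items are sold and there is no \emph{trading path}, i.e.\ a simple alternating path $(a_1,t_1,\ldots,a_j)$ with $v_{a_j}>v_{a_1}$ and \emph{end-of-auction} leftover budget $b^*_{a_j}\ge v_{a_1}$. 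This characterization is what makes the problem tractable, because a Pareto improvement may redistribute payments arbitrarily (including negative payments in $(M',P')$); your per-transfer case (i), arguing that a gainer ``cannot absorb $t$'' if the clinching price exceeds $v_i$, does not follow, since nothing forces the gainer to pay anything near that price in $(M',P')$ --- the binding constraints are the endpoint values along an entire alternating path together with the leftover budget at the \emph{end} of the auction, not the price and budget at the moment $t$ was clinched. Similarly, your dismissal of the $V$-components (``rerouting items away from $V$-bidders strictly reduces the number of items sold'') is false as stated: items can be rerouted to other agents with spare demand without unselling anything; the paper's Lemma~\ref{lem:SellV} instead shows that if such a trading path existed, the final allocation restricted to the relevant suffix would exhibit a maximal matching assigning strictly fewer items to $V$-agents, contradicting the minimality of the $V$-avoid matching used at line~\ref{line:SellV}, and this uses that $H_a=\mathrm{True}$ at that moment and the $b^*_{a_j}\ge v_{a_1}=p^{\pi}$ condition to certify spare demand at $a_j$.

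Second, the part of the argument you describe only loosely --- translating the invariant $B(\neg\{a\})\ge m$ into an obstruction --- is precisely where the paper needs its most delicate machinery, and none of it appears in your plan: one must order the hypothetical trading paths by the time their first edge is sold, take the one whose first-sold edge is sold \emph{last}, and then (Lemma~\ref{lem:x}) massage the final allocation $F^{\pi}$ via the symmetric difference with the avoid matching $Y^{\pi}$ into a matching $X$ that agrees with $Y^{\pi}$ on $a_i$ and with $F^{\pi}$ on $a_j$, using the extremal choice of $\pi$ to rule out intersections (otherwise one manufactures a later trading path). Only then does the case analysis on $d_{a_j}$ versus $X(a_j)$ (with the subcases $D_{a_j}=D^+_{a_j}$ and $D_{a_j}\ne D^+_{a_j}$) yield either a contradiction to $Y^{\pi}$ being an $a_i$-avoid matching or the budget bound $b^*_{a_j}\le p^{\pi}<v_{a_i}\le v_{a_1}$, contradicting the trading-path condition. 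Your proposal gestures at the right ingredients (avoid matchings, budgets, alternating structure) but without the trading-path characterization, the extremal ordering of paths, and the Lemma~\ref{lem:x}-type construction, the contradiction you want does not go through; as written it is an outline with the decisive steps missing and two of the stated reductions incorrect.
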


In Section \ref{subsec:aptppo} we define the notion of trading paths and show the equivalence between allocations with no trading paths and Pareto optimal allocations.
In Appendix \ref{subsec:intrem} we attempt to give some intuition as to why these two are related as well as to why Theorem \ref{thm:main} gives the desirable outcome. In Section \ref{subsec:notradingpaths} we show that the final allocation produced by Algorithm \ref{alg:combclinch} contains no trading paths, thus concluding the proof of Theorem \ref{thm:main}.

\subsection{Alternating paths, Trading paths, and Pareto-optimality}
\label{subsec:aptppo}
\begin{definition}
Consider a path $\pi = (a_1, t_1, a_2, t_2, \ldots, a_{j-1}, t_{j-1}, a_j)$, in a bipartite graph $G$. We say that the path $\pi$ is an
alternating path with respect to $B$-matching $M$ if $(a_i,t_i) \in M$ and $t_i \in S_{i+1}$ for all $1 \leq i<j$. We say that an alternating
path is simple if no agent appears more than once along the path. Note that all alternating paths are of even length (even number of edges).
\end{definition}

\begin{definition}

A path $\pi = (a_1, t_1, a_2, t_2, \ldots, a_{j-1}, t_{j-1}, a_j)$ is called a \emph{trading path} with respect to the allocation
$(M,P)$ if the following hold:

\begin{enumerate}
\item $\pi$ is a simple  alternating path with respect to $M$, (which implies that
agent $a_i$, $i<j$, was allocated item $t_i$ during the course of the auction).
\item The valuation of agent $a_j$, $v_{a_j}$ is strictly greater than the
valuation of agent $a_{1}$, $v_{a_{1}}$.
\item The remaining (unused) budget of agent $a_j$ at the conclusion of the
auction, $b^*_{a_j}$, is $\geq$ the valuation of agent $a_{1}$, $v_{a_{1}}$.
\end{enumerate}
\end{definition}

Intuitively, trading paths, as their name suggests, represent possible trades amongst agents. A trading path allows a trade to take place, where the endpoints of the trading path are better off following the trade, and the interior agents no worse off. (In fact, they can all be made better off by paying a ``commission" of sorts along the path).

We now turn to the following equivalence:

\begin{theorem}
\label{thm:paths} Any allocation $(M,P)$ is Pareto-optimal\footnote{We remark that an analogous (but simpler) claim made in the proceedings
version of the multi unit auction with budget paper \cite{DBLP:conf/focs/DobzinskiLN08} was incorrect but was corrected in \cite{DLNcorrected}.}
{\it if and only if}
\begin{enumerate} \item All items are sold in $(M,P)$, and
\item There are no trading paths in $G$ with respect to $(M,P)$. \end{enumerate}
\end{theorem}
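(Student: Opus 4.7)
The plan is to prove each direction of the equivalence by contrapositive, constructing in one direction a Pareto-dominator from a trading path or unsold item, and in the other direction extracting a trading path from a given Pareto-dominator.

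For the $(\Rightarrow)$ direction, suppose either an item is unsold in $M$ or a trading path exists. If an item $t$ is unsold, pick any agent $a$ with $t \in S_a$ (guaranteed by the standing assumption), set $M' = M \cup \{(a,t)\}$, and keep $P' = P$; then $a$'s utility rises by $v_a > 0$, every other party is unchanged, and all budget and rationality constraints still hold. If a trading path $\pi = (a_1, t_1, \ldots, a_{j-1}, t_{j-1}, a_j)$ is present, build $(M',P')$ by replacing each edge $(a_i, t_i)$ with $(a_{i+1}, t_i)$ for $1 \le i < j$ and adjusting only the endpoint payments $P'_{a_1} = P_{a_1} - v_{a_1}$ and $P'_{a_j} = P_{a_j} + v_{a_1}$. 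Simplicity of $\pi$ makes $M'$ a valid matching; the inequality $b^*_{a_j} \ge v_{a_1}$ keeps $a_j$'s payment within budget; single-valuedness makes every interior $a_i$'s valuation change zero (she trades $t_i$ for $t_{i-1}$, both worth $v_{a_i}$); $a_1$'s utility is preserved; $a_j$'s utility strictly rises by $v_{a_j} - v_{a_1} > 0$; and $\sum_a P'_a = \sum_a P_a$ keeps the auctioneer whole. Hence $(M',P')$ Pareto-dominates $(M,P)$.

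For the $(\Leftarrow)$ direction, let $(M',P')$ Pareto-dominate $(M,P)$, and suppose all items are sold in $M$ (else (1) fails and we are done). Summing the bidder inequalities $M'_a v_a - P'_a \ge M_a v_a - P_a$ with $\sum_a P'_a \ge \sum_a P_a$ (one strict) yields the strict welfare gain $\sum_a M'_a v_a > \sum_a M_a v_a$. Decompose $M \triangle M'$ into alternating paths and cycles. Along a cycle each agent has equal $M$- and $M'$-degree, and by single-valuedness her valuation contribution is zero. Using that all items are sold in $M$, a parity check shows any path with an item endpoint must have that item unsold in $M'$ and its other endpoint an agent losing an item, giving a non-positive valuation contribution. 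Therefore at least one component $\pi^*$ is an agent-to-agent alternating path with $v_{a_{j^*}} > v_{a_{1^*}}$, where $a_{1^*}$ loses an item and $a_{j^*}$ gains one; read inside the interest graph, $\pi^*$ has the simple alternating shape required by a trading path.

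The remaining obstacle is the budget condition $b^*_{a_{j^*}} \ge v_{a_{1^*}}$, which is not automatic. The plan is to choose $(M',P')$ among all Pareto-dominators to minimize $|M \triangle M'|$ and use an exchange argument to show that every agent-to-agent path in $M \triangle M'$ must satisfy the budget condition. Cycles and item-endpoint paths can be reverted while preserving Pareto-dominance (the former because they leave every utility unchanged, the latter because reverting lets the losing endpoint regain an item at no cost to anyone), contradicting minimality. For an agent-to-agent path whose budget condition fails, $b^*_{a_j} < v_{a_1}$, bidder-rationality at $a_1$ gives $P_{a_1} - P'_{a_1} \ge v_{a_1}$ and the budget bound gives $P'_{a_j} - P_{a_j} \le b^*_{a_j} < v_{a_1}$; one may then revert the path in $M'$, reset $P'_{a_j}$ down to $P_{a_j}$, and raise $P'_{a_1}$ by the same amount, yielding a smaller Pareto-dominator, again contradicting minimality. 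Thus in a minimal dominator every agent-to-agent path satisfies the budget condition, so the path $\pi^*$ guaranteed above is a trading path, completing the proof.
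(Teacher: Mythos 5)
Your forward direction is the same as the paper's (assign an unsold item to an interested agent; otherwise shift items along the trading path and move a payment of $v_{a_1}$ from $a_1$ to $a_j$, using $b^*_{a_j}\ge v_{a_1}$ for feasibility). For the converse you take a genuinely different route. The paper works with an \emph{arbitrary} dominator $(M',P')$: after decomposing $M\oplus M'$ into alternating paths and cycles (Lemma~\ref{lem:paths-and-cycles}), it uses the assumed absence of trading paths to classify each agent-to-agent path as value-limited ($v_{y_j}\le v_{x_j}$) or budget-limited ($b^*_{y_j}<v_{x_j}$), then runs a global payment-accounting argument: the sellers' payments must drop by at least $Z^-=\sum_j v_{x_j}$ while the buyers can add at most $Z^+\le Z^-$, so the auctioneer cannot gain, equality must hold, the budget-limited set is forced empty (Equation~(\ref{eq:equal})), and the strict welfare inequality $\sum_j v_{y_j}>\sum_j v_{x_j}$ then yields the contradiction. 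You instead choose a dominator minimizing $|M\triangle M'|$ and do local surgery: cycles, item-endpoint paths, and any agent-to-agent path whose budget condition fails can each be reverted (in the last case shifting the amount $P'_{a_j}-P_{a_j}\le b^*_{a_j}<v_{a_1}$ of payment from $a_j$ back onto $a_1$) to give a strictly smaller dominator; hence a minimal dominator consists only of agent-to-agent paths with $b^*_{y}\ge v_{x}$, and the welfare gain exhibits one with $v_y>v_x$, i.e.\ a trading path. Your exchange argument is arguably more transparent---each obstruction is removed locally and the two-stage revenue accounting disappears---while the paper's version needs no extremal choice and applies directly to any dominator. Two details you should make explicit to close your version: (i) the modified allocation must remain budget-feasible for $a_1$; this does hold, because the domination inequality at $a_1$ together with $a_1$'s net loss of an item gives $P'_{a_1}\le P_{a_1}-v_{a_1}$, so after adding $P'_{a_j}-P_{a_j}<v_{a_1}$ the new payment stays below $P_{a_1}\le b_{a_1}$ (note this inequality is the domination condition at $a_1$, not ``bidder rationality'' as you label it); and (ii) that inequality presupposes a decomposition in which no agent is simultaneously the losing endpoint of one path and the gaining endpoint of another---exactly the maximality property proved in Lemma~\ref{lem:paths-and-cycles}, which you should invoke rather than leave implicit.
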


\noindent Proof in Appendix \ref{app:paths}.

\subsection{No Trading Paths in $(M^*,P^*)$}
\label{subsec:notradingpaths}

To conclude the proof of Theorem \ref{thm:main} we now prove that there are no trading paths in the final allocation $(M^*,P^*)$ generated by the mechanism given in Algorithm \ref{alg:combclinch}.

We know
from Lemma \ref{lem:sells-all} that $M^*$ matches all items.

Consider the set of all trading paths $\Pi$ in the final allocation $M^*$.
\begin{definition} \label{def:suppi}
We define the following for every $\pi \in \Pi$: \begin{itemize}
\item Let $Y^{\pi}$ be the $S$-avoid matching used the first time some item $t$ is sold to some agent $a$ where $(a,t)$ is an edge along $\pi$.
$Y^{\pi}$ is either a $V$-avoid matching (line \ref{line:SellV} of Algorithm \ref{alg:combclinch}) or an $a$-avoid matching for some agent-item
edge $(a,t)$ along $\pi$ (line \ref{line:Selli} of Algorithm \ref{alg:combclinch}).
\item If $Y^{\pi}$ is a $V$-avoid matching, let $V^{\pi}$ be this set of value limited agents.
\item If $Y^{\pi}$ is an $a$-avoid matching, let $a^{\pi}$ be this agent.
\item Let $F^{\pi}\subset M^*$ be the set of edges $(a,t)$ such that item $t$ was sold to agent $a$ at or subsequent to the first time
that some item $t'$ was sold to some agent $a'$ for some edge $(a',t')\in \pi$ ($(a',t')$ is itself in $F^{\pi}$).
\item Let $m^{\pi}$ be the number of unsold items just before the first time some edge along $\pi$ was sold. {\sl I.e.},
$m^{\pi}$ is equal to the number of items matched in $F^{\pi}$.
\item Let $p^{\pi}$ be the price at which item[s] were sold from $Y^{\pi}$.
\item Let $b^{\pi}_a$ be the remaining budget for agent $a$ before any items are sold in Sell($V^{\pi}$) or Sell($a^{\pi}$).
\end{itemize}
\end{definition}
We partition $\Pi$ into two classes of trading paths:
\begin{enumerate} \item $\Pi_V$ is the set of trading paths such that $\pi \in \Pi_V$ iff $Y^{\pi}$ is some $V^{\pi}$-avoid matching
used in Sell($V^{\pi}$) (line \ref{line:SellV} of Algorithm \ref{alg:combclinch}).
\item $\Pi_{\neg{V}}$ is the set set of trading paths such that $\pi \in \Pi_{\neg{V}}$ iff
$Y^{\pi}$ is some $a^{\pi}$-avoid matching used in Sell($a^{\pi}$) (line \ref{line:Selli} of Algorithm \ref{alg:combclinch}). \end{enumerate}

\begin{lemma} \label{lem:SellV} $\Pi_V=\emptyset$.
\end{lemma}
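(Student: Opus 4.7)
The plan is to assume for contradiction that $\pi = (a_1, t_1, \ldots, a_j) \in \Pi_V$ and derive a contradiction with the minimum-$V^\pi$-usage property of the avoid-matching $Y^\pi$ selected at the time $\tau$ just before the first edge of $\pi$ is sold. I would first reduce to the WLOG assumption that this first sold edge is $(a_1, t_1)$. If instead it is $(a_i, t_i)$ with $i > 1$, then the suffix $(a_i, t_i, \ldots, a_j)$ is itself a trading path in $\Pi_V$ with the same $Y^\pi, V^\pi, \tau$: since $(a_1, t_1)$ is sold only later (at price $\geq p^\pi$) we have $v_{a_1} \geq p^\pi = v_{a_i}$, so the defining inequalities $v_{a_j} > v_{a_1} \geq v_{a_i}$ and $b^*_{a_j} \geq v_{a_1} \geq v_{a_i}$ transfer to the suffix. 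After renaming I may therefore assume $a_1 \in V^\pi$, $v_{a_1} = p^\pi$, $v_{a_j} > p^\pi$, and $b^*_{a_j} \geq p^\pi$.

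Next I would show that the ``future matching'' $M^* \setminus M^\tau$, pairing each currently unsold item with its final buyer, is a valid max $B$-matching of size $m^\pi$ under the capacities $d_a = D_a$ at $\tau$ (all $H_a$ are True inside Sell$(V^\pi)$). The core budget computation is that every future purchase of agent $a$ costs $\geq p^\pi$ (prices are non-decreasing), so if $a$ receives $c_a$ further items then $c_a \cdot p^\pi \leq b^\pi_a - b^*_a \leq b^\pi_a$, giving $c_a \leq \lfloor b^\pi_a/p^\pi \rfloor \leq D_a$. Applying this to $a_j$ and using the extra $b^*_{a_j} \geq p^\pi$ slack strengthens the bound to $(c_{a_j}+1)\, p^\pi \leq b^\pi_{a_j}$, so $d_{a_j} \geq c_{a_j}+1$.

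The heart of the proof is the trading-path reroute. I define
\[ N = \bigl(M^* \setminus M^\tau\bigr) \setminus \{(a_k, t_k) : 1 \leq k < j\} \cup \{(a_{k+1}, t_k) : 1 \leq k < j\}. \]
The slack in $d_{a_j}$ makes $N$ a valid matching of size $m^\pi$ (all other capacity counts are unchanged or decrease). Counting items assigned to $V^\pi$: $a_1 \in V^\pi$ loses $t_1$; every interior agent $a_k$ simply swaps $t_k$ for $t_{k-1}$, contributing the same to $V^\pi$-usage whether or not $a_k \in V^\pi$; and $a_j \notin V^\pi$ (since $v_{a_j} > p^\pi$) gains $t_{j-1}$. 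Hence $N$ is a max matching with exactly one fewer item going to $V^\pi$ than $M^* \setminus M^\tau$.

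The hard part is closing the loop by showing $M^* \setminus M^\tau$ already achieves the minimum $V^\pi$-usage among max matchings at $\tau$, so that $N$ contradicts the algorithm's choice of $Y^\pi$. After Sell$(V^\pi)$ returns, line \ref{line:VHi} sets $H_a =$ False for every $a \in V^\pi$, and then $D_a^+ = 0$ because the price will strictly exceed $v_a = p^\pi$; thus no $V^\pi$ agent acquires items in any later iteration of the outer loop. Therefore the number of items that $V^\pi$ receives in $M^* \setminus M^\tau$ equals the number of remaining iterations of Sell$(V^\pi)$ from $\tau$ onward. A short induction on those iterations shows each one decrements the minimum $V^\pi$-usage among max matchings by exactly one: selling the chosen edge $(a,t) \in Y^\pi$ shrinks $Y^\pi$ and its $V^\pi$-usage by one while preserving optimality, and conversely any improved next-step matching could be extended by $(a,t)$ to beat $Y^\pi$. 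This identifies the $V^\pi$-usage of $M^* \setminus M^\tau$ with that of $Y^\pi$, so $N$ strictly beats $Y^\pi$ and yields the desired contradiction.
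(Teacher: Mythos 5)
Your proposal is correct and follows essentially the same route as the paper: reduce (via the suffix-is-still-a-trading-path observation) to the case where the first edge of $\pi$ is the first one sold, observe that the future matching $F^{\pi}=M^*\setminus M^{\tau}$ is a feasible full matching whose $V^{\pi}$-usage equals that of $Y^{\pi}$, and then reroute items along $\pi$ using the budget slack $b^*_{a_j}\geq v_{a_1}=p^{\pi}$ to build a full matching with strictly fewer items on $V^{\pi}$-agents, contradicting the minimality of the avoid matching. You merely spell out two steps the paper asserts tersely (feasibility of $F^{\pi}$ at time $\tau$ and the identification of its $V^{\pi}$-usage with that of $Y^{\pi}$), which is fine; the only cosmetic slip is writing $\lfloor b^{\pi}_a/p^{\pi}\rfloor\leq D_a$, where the correct justification is $c_a\leq\min\{m^{\pi},\lfloor b^{\pi}_a/p^{\pi}\rfloor\}=D_a$.
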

\begin{proof}

We need the following Claim:

\begin{claim} Given a trading path $\pi = (a_1,t_2, \ldots, a_{j-1},t_{j-1},a_j)\in \Pi_V$, and let $(a_i,t_i)$
be the last edge belonging to $Y^{\pi}$ along $\pi$. Then the suffix of $\pi$ starting at $a_i$, $(a_i, t_i, \ldots, a_j)$, is itself a
trading path.
\end{claim}
\begin{proof}
This trivially follows as the valuation of $a_i$ is equal to current price when Sell($V^{\pi}$) was done ($p^{\pi}$), and the valuation of $a_1$
is $\geq p^{\pi}$ as edge $(a_1,t_1)$ was unsold prior to this Sell($V^{\pi}$) and does belong to the final $F^{\pi}$.
\end{proof}

From the Claim above we may assume, without loss of generality, that if $\Pi_V \neq \emptyset$ then $\exists \pi \in \Pi_V$ such that the first
edge along $\pi$ was also the first edge sold amongst all edges of $\pi$, furthermore, all subsequent edges do not belong to $Y^{\pi}$.

 As agents $a\in V^{\pi}$ will not be sold any further items after this Sell($V^{\pi}$), the items assigned to
$a_1$ in $Y^{\pi}$ are the same items assigned to $a_1$ in $F^{\pi}$.

We seek a contradiction to the assumption that $Y^{\pi}$ was a $V^{\pi}$-avoid matching. Note that the matching $F^{\pi}$ is a $V^{\pi}$-avoid matching by itself, because exactly the items assigned to $V$-type agents in $Y^{\pi}$ are sold. We now show how to construct from $F^{\pi}$ another matching that assigns less items to $V$-type agents.

We show that the number of items assigned to agent $a_1$ in $F^{\pi}$ (which is the same as in $Y^{\pi}$) can be reduced by one by giving agent $a_{k+1}$ item $t_k$ for $k=1, \ldots, j-1$. This is also a full matching but it remains to show that this does not exceed the capacity constraints for agent $a_j$, $d_{a_j}$.

As $H_a = \rm{True}$ for all $a\in A$ when Sell($V^{\pi}$) is done, this means that $d_{a_j}=D_{a_j}$. Agent $a_j$ has remaining budget $\geq v_1$ at the conclusion of the auction, and all items assigned to agent $a_j$ in $F^{\pi}$ are at price $\geq p^{\pi} = v_1$. This implies that at the time of Sell($V^{\pi}$) we have $D_{a_j} > $ the number of items assigned to $a_j$ in $F^{\pi}$. Thus, we can increase the number of items allocated to $a_j$ by one without exceeding the demand constraint $d_{a_j}=D_{a_j}$.

Now, note that $a_j$ is not $V$-type agent, so the new matching constructed assigns less items to $V$ type agents then the matching $F^{\pi}$. Hence, $F^{\pi}$ is not an $V^{\pi}$-avoid matching, and in turn neither $Y^{\pi}$ is $V^{\pi}$-avoid matching.
\end{proof}

We've shown that $\Pi_V=\emptyset$. It remains to show that $\Pi_{\neg{V}}=\emptyset$.

Assume $\Pi_{\neg{V}} \neq \emptyset$. Order $\pi \in \Pi_{\neg{V}}$ by the first time at which some edge along $\pi$ was sold. We know that
this occurs within some Sell($a^{\pi}$) for some $a^{\pi}$ and that $a^{\pi} \notin V$. Let us define $\pi = (a_1, t_1, a_2, t_2, \ldots, a_{j-1},
t_{j-1}, a_j)$ be the last path in this order, and let $e=(a^{\pi},t^{\pi})=(a_i,t_i)$.

Recall that $Y^{\pi}$ is the $a^{\pi}$-avoid matching used when item $t^{\pi}$ was sold to agent $a^{\pi}$. Also, $F^{\pi}\subset M^*$ is the
set of edges added to $M^*$ in the course of the auction from this point on (including the current Sell($a_i$)).

\begin{lemma} \label{lem:x} Let $\pi$, $a^{\pi} = a_i$, $t^{\pi}=t_i$, be as above, we argue that when $Y^{\pi}$ was computed as an $a^{\pi}$-avoid matching
there was another full matching $X$ with the following properties:
\begin{enumerate}
\item The suffix of $\pi$ from $a_i$ to $a_j$: $$\pi[a_i,\ldots,a_j] = (a_i,t_i,a_{i+1},t_{i+1},\ldots, a_{j-1},t_{j-1}, a_j),$$ is an alternating path
with respect to $X$. (\sl I.e., edges $(a_k,t_k)$, $i \leq k \leq j-1$, belong to $X$).
\item The number of items assigned to $a_i$ in $X$ is equal to the number of items assigned to $a_i$ in $Y^{\pi}$.
\item The number of items assigned to $a_j$ in $X$ is equal to the number of items assigned to $a_j$ in $F^{\pi}$.
\end{enumerate}
\end{lemma}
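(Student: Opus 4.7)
I plan to construct $X$ by starting from the matching $F^{\pi}$ — which already satisfies properties 1 and 3 by its very definition (all path edges $(a_k,t_k)$ with $i\leq k\leq j-1$ were sold at or subsequent to $Y^{\pi}$'s moment, hence lie in $F^{\pi}$, while $|F^{\pi}|_{a_j}=n_j^F$ trivially) — and modifying it locally near $a_i$ to enforce property 2, preserving the path edges and the load at $a_j$.

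First I would verify that $F^{\pi}$, viewed as an edge set in the bipartite interest graph at the moment $Y^{\pi}$ was computed, is itself a legal full $B$-matching: each agent $a$ makes at most $\lfloor b_a^{\pi}/p^{\pi}\rfloor = D_a^{\pi}$ subsequent purchases (each at price $\geq p^{\pi}$, paid from the remaining budget $b_a^{\pi}$), and for $a_j$ the trading-path condition $b^*_{a_j}\geq v_{a_1}\geq p^{\pi}$ tightens this to $|F^{\pi}|_{a_j}\leq D_{a_j}^{\pi}-1\leq D_{a_j}^{+}\leq d_{a_j}^{\pi}$; agents whose $H$-flag was flipped earlier in the current repeat-loop admit an analogous slack argument taken at the moment of the flip. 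Then I would form the symmetric difference $F^{\pi}\triangle Y^{\pi}$ and decompose it into alternating paths and even cycles in the usual bipartite way. Each component, when toggled in $F^{\pi}$ (edges of $F^{\pi}$ along the component removed, those of $Y^{\pi}$ added), changes the agent-loads only at the two endpoints of a path component by $\pm 1$ and leaves cycle components' loads intact. To drive $|F^{\pi}|_{a_i}$ to $n_i^Y$ I toggle a collection of alternating paths each having $a_i$ as one endpoint, chosen so that no toggled path terminates at $a_j$ and none contains a path edge $(a_k,t_k)$ belonging to $F^{\pi}\setminus Y^{\pi}$.

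The main obstacle is guaranteeing that enough such \emph{safe} alternating paths exist to perform the $|n_i^F-n_i^Y|$ required toggles. I would argue this by contradiction: if no safe alternatives exist, one can splice an unsafe alternating walk out of $a_i$ with the trading-path suffix to obtain an alternating path in $Y^{\pi}\triangle F^{\pi}$ from $a_i$ to $a_j$. Starting from $Y^{\pi}$ and toggling along this $a_i$-to-$a_j$ path yields a full $B$-matching whose count at $a_i$ is strictly less than $|Y^{\pi}|_{a_i}$, contradicting the fact that $Y^{\pi}$ minimises the count at $a^{\pi}=a_i$ among maximal $B$-matchings, provided $a_j$ has the requisite slack. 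The slack $|Y^{\pi}|_{a_j}\leq D_{a_j}^{+}<D_{a_j}^{\pi}$ follows once more from the trading-path condition $b^*_{a_j}\geq v_{a_1}\geq p^{\pi}$, which guarantees $a_j$ can afford at least one additional item at price $p^{\pi}$. Hence enough safe toggles exist, and iterating them produces the desired $X$ satisfying properties 1, 2, and 3 simultaneously.
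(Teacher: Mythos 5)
There is a genuine gap, and it sits exactly at the heart of the lemma. Your overall plan (start from $F^{\pi}$, which already satisfies properties 1 and 3, decompose $F^{\pi}\oplus Y^{\pi}$ into alternating paths and cycles, and toggle paths incident to $a_i$ to bring its load down to $Y^{\pi}(a_i)$) matches the paper's. The whole difficulty, however, is to show that the paths out of $a_i$ that you toggle can be taken disjoint from the suffix $\pi[a_i,\ldots,a_j]$ (so that the suffix edges survive and the load at $a_j$ is untouched), and your argument for this does not work. You never use the hypothesis that $\pi$ was chosen as the \emph{last} trading path in the order by first-sold edge, and this hypothesis is precisely what the paper's proof uses at this point: if an alternating path $\tau$ out of $a_i$ met the suffix at an agent, one could splice the prefix of $\pi$ up to $a_i$, the prefix of $\tau$, and the tail of the suffix into a new trading path none of whose edges was sold before $(a_i,t_i)$, contradicting the extremal choice of $\pi$. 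Your substitute contradiction --- that an ``unsafe'' path would let you build, from $Y^{\pi}$, a full matching with strictly fewer items at $a_i$, contradicting that $Y^{\pi}$ is an $\{a_i\}$-avoid matching --- fails for two reasons. First, the direction of the toggle is wrong: the paths in $F^{\pi}\oplus Y^{\pi}$ that witness $F^{\pi}(a_i)>Y^{\pi}(a_i)$ begin at $a_i$ with an $F^{\pi}$-edge, so toggling them \emph{in $Y^{\pi}$} increases $a_i$'s load (and toggling them in $F^{\pi}$ only brings it down towards $Y^{\pi}(a_i)$, never below it); you cannot beat the minimality of $Y^{\pi}$ at $a_i$ this way. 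Second, even if you had a suitable path into $a_j$, your feasibility claim ``$|Y^{\pi}|_{a_j}\leq D^{+}_{a_j}<D_{a_j}$, so $a_j$ can absorb one more item'' is false in general: $D^{+}_{a_j}=D_{a_j}$ unless the budget is an exact multiple of the price (or $v_{a_j}=p$), $d_{a_j}$ may equal $D_{a_j}$, and the trading-path condition $b^{*}_{a_j}\geq v_{a_1}$ does \emph{not} give demand slack at the moment $Y^{\pi}$ was computed. Indeed, the case $d_{a_j}=X(a_j)$ (no slack at $a_j$) is explicitly possible and is handled \emph{after} this lemma, in the corollary, by a budget computation ending in $b^{*}_{a_j}\leq p^{\pi}<v_{a_1}$ rather than by exhibiting a better matching; Lemma \ref{lem:x} is deliberately stated so as to require no slack at $a_j$. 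Also, your spliced walk is not contained in $Y^{\pi}\oplus F^{\pi}$, since suffix edges of $\pi$ may lie in $F^{\pi}\cap Y^{\pi}$, so it cannot be toggled as claimed.

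To repair the argument you would essentially have to reproduce the paper's step: order the trading paths of $\Pi_{\neg V}$ by the time their first edge is sold, take $\pi$ last, and show that any alternating path $\tau$ of $F^{\pi}\oplus Y^{\pi}$ leaving $a_i$ is vertex-disjoint from $\pi[a_i,\ldots,a_j]$ except at $a_i$ --- an intersection at an item contradicts that items have a unique owner in $F^{\pi}$ (or the choice of the first common vertex), and an intersection at an agent produces a later trading path, contradicting the choice of $\pi$. Only then can all such $\tau$ be toggled to obtain $X$ with $X(a_i)=Y^{\pi}(a_i)$ while leaving the suffix edges and the load of $a_j$ exactly as in $F^{\pi}$.
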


\begin{proof}
Consider the final matching $F^{\pi}$. Note that $F^{\pi}(a_i) \geq Y^{\pi}(a_i)$, because otherwise if $F^{\pi}(a_i) < Y^{\pi}(a_i)$ then
$F^{\pi}(a_i)$ would have fewer items assigned to $a_i$ than the $a_i$-avoid matching $Y^{\pi}$, a contradiction.

 If $F^{\pi}(a_i) =
Y^{\pi}(a_i)$ then choose $X=F^{\pi}$ and conditions 1 -- 3 above hold trivially.

Thus, we are left with the case where $F^{\pi}(a_i) > Y^{\pi}(a_i)$. Consider the symmetric difference $F^{\pi} \oplus Y^{\pi}$. By Lemma
\ref{lem:paths-and-cycles} the edges of $F^{\pi} \oplus Y^{\pi}$ can be covered by alternating paths with respect to $F^{\pi}$. There must be
 $\delta = F^{\pi}(a_i) - Y^{\pi}(a_i)$ such paths starting at agent $a_i$ (as agent $a_i$ has $\delta$ more items assigned
 in $F^{\pi}$ than in $Y^{\pi}$). Take one of these paths $\tau = (a_i=g_1, s_1, g_2, s_2, \ldots, g_{\ell})$, $g_k$'s are agents, $s_k$'s are
 items, $(g_k,s_k)$ belongs to $F^{\pi}$, $(s_k,g_{k+1})$ belongs to $Y^{\pi}$.

 We now argue that $\tau$ and $\pi[a_i,\ldots,a_j]$ are vertex
 disjoint besides the first agent $a_i$. To reach a contradiction, assume that there is another common vertex $u$ along $\tau$ and
 along $\pi[a_i,\ldots,a_j]$,
 $u\neq a_i$. Choose $u$ to be the first such vertex along $\tau$.

 We consider two possibilities:
 \begin{enumerate}
 \item
 $u$ is an item. Consider $$\pi[a_i,\ldots,a_j]=(a_i, t_i, a_{i+1}, t_{i+1}, \ldots, a_{j-1}, t_{j-1}, a_j),$$ and let  $u=s_k=t_{k'}$
 for some $k,k'$. Then both $(g_k,s_k = t_{k'} =u )$ and $(a_{k'},s_k=t_{k'}=u)$ belong to $F^{\pi}$.
 This implies either that  item $u$ is assigned to two different agents in $F^{\pi}$ or that $a_{k'}=g_k$
 in contradiction to our choice of $u$ as the first common vertex along $\tau$.
 \item $u$ is an agent. For some $i < k \leq j$, $1 < k'\leq \ell$, $u=g_k=a_{k'}$.
 Let $\pi'$ be the concatenation of the prefix of $\pi$ up to $a_i$, followed by the prefix of $\tau$ up to
$g_k$ and then followed by the suffix of $\pi$ from $g_k = a_{k'}$ to the end:
 $$\pi' = (a_1,t_1,\ldots, a_i=g_1, s_1, g_2, \ldots, g_k = a_{k'}, t_{k'}, a_{k'+1}, \ldots, a_j).$$ This path is a trading path in $F^{\pi}$, and none of
 the edges along this path were sold before the edge $(a_i,t_i)$, in contradiction to the assumption that $\pi$ had it's first sold edge sold
 last amongst all trading paths.
 \end{enumerate}

Therefore, $\tau$ and $\pi[a_i,\ldots,a_j]$ only have $a_i$ in common. By Lemma \ref{lem:paths-and-cycles} the different paths $\tau$ starting
from $a_i$ in $Y^{\pi}\oplus F^{\pi}$ are edge disjoint. For any such $\tau = (a_i=g_1, s_1, g_2, s_2, \ldots, g_{\ell})$, agent $g_k$ holds
item $s_k$ in $F^{\pi}$, $1 \leq k \leq \ell-1$, and agent $g_{k+1}$ holds item $s_k$ in $Y^{\pi}$, $1 \leq k \leq \ell-1$. Therefore, we can
move item $s_k$ from agent $g_k$ to agent $g_{k+1}$, $1 \leq k \leq j-1$, without violating the demand of agent $g_{\ell}$ because $s_{\ell-1}$
was assigned to $g_{\ell}$ in $Y^{\pi}$. As we can do so for all such paths $\tau$ we obtain a new full matching $X$ where the number of items
assigned to agent $a_i$ is the same as the number of items assigned to agent $a_i$ in $Y^{\pi}$.

Note that, other than $a_i$, none of the agents along the path $\pi[a_i,\ldots,a_j]$ appears on any of these $\tau$ and therefore their
assignment in $X$ remains unchanged from their assignment in $F^{\pi}$.

\end{proof}

\begin{corollary} $\Pi_{\neg{V}} = \emptyset$.

\end{corollary}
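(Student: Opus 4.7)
The plan is to contradict the minimality property of the $\{a_i\}$-avoid matching $Y^{\pi}$ by exhibiting, at the moment $Y^{\pi}$ is computed, another full $B$-matching $X'$ in which $a_i$ receives strictly fewer items. Starting from the matching $X$ supplied by Lemma \ref{lem:x}, I would shift one unit along the alternating suffix $\pi[a_i,\ldots,a_j]$: for $k=i,\ldots,j-1$ delete the edge $(a_k,t_k)$ (which lies in $X$ by property 1 of Lemma \ref{lem:x}) and insert the edge $(a_{k+1},t_k)$ (a legal edge of the interest graph because $\pi$ is alternating and hence $t_k\in S_{a_{k+1}}$). The net effect is $X'(a_i)=X(a_i)-1=Y^{\pi}(a_i)-1$, $X'(a_j)=X(a_j)+1=F^{\pi}(a_j)+1$, and no change for the intermediate agents. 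Since the call Sell($a^{\pi}$) is only triggered when $B(\neg\{a_i\})<m^{\pi}$, the full matching $Y^{\pi}$ satisfies $Y^{\pi}(a_i)\geq 1$, so $X'$ is well-defined.

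The main obstacle will be checking that $X'$ still respects the demand constraint of $a_j$ at the moment $Y^{\pi}$ is computed, i.e.\ that $d_{a_j}\geq F^{\pi}(a_j)+1$. I would use a budget-accounting argument modelled on the one in Lemma \ref{lem:SellV}. After $Y^{\pi}$, agent $a_j$ buys its $F^{\pi}(a_j)$ items at prices at least $p^{\pi}$ and finishes the auction with $b^*_{a_j}\geq v_{a_1}$, which gives $b^{\pi}_{a_j}\geq F^{\pi}(a_j)\cdot p^{\pi}+v_{a_1}$. A preliminary step is to show $v_{a_1}>p^{\pi}$ strictly: if $v_{a_1}=p^{\pi}$ then $a_1$ belongs to $V$ at this price, Sell($V$) sets $H_{a_1}=\mathrm{False}$, and from that point on $d_{a_1}=D^{+}_{a_1}=0$, so $a_1$ could never be matched to $t_1$, contradicting $(a_1,t_1)\in M^*$ (by the choice of $\pi$ the edge $(a_1,t_1)$ is sold no earlier than Sell($a^{\pi}$)). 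From $v_{a_1}>p^{\pi}$ I get $b^{\pi}_{a_j}>(F^{\pi}(a_j)+1)\,p^{\pi}$, which yields $d_{a_j}\geq F^{\pi}(a_j)+1$ in either sub-case: when $H_{a_j}=\mathrm{True}$, using $d_{a_j}=D_{a_j}=\min\{m^{\pi},\lfloor b^{\pi}_{a_j}/p^{\pi}\rfloor\}$ together with $m^{\pi}\geq F^{\pi}(a_j)+F^{\pi}(a_i)\geq F^{\pi}(a_j)+1$; and when $H_{a_j}=\mathrm{False}$, using that the strict inequality $b^{\pi}_{a_j}/p^{\pi}>F^{\pi}(a_j)+1$ survives a sufficiently small positive perturbation of $p^{\pi}$, so $D^{+}_{a_j}\geq F^{\pi}(a_j)+1$.

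Having verified that $X'$ is a valid full $B$-matching with strictly fewer items assigned to $a_i$ than $Y^{\pi}$ contradicts the fact that $Y^{\pi}$ was chosen as an $\{a_i\}$-avoid matching. Therefore $\Pi_{\neg V}=\emptyset$. Combined with Lemma \ref{lem:SellV} this gives $\Pi=\emptyset$, and Theorem \ref{thm:paths} then yields the Pareto-optimality of $(M^*,P^*)$, completing the proof of Theorem \ref{thm:main}.
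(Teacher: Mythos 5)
Your proposal is correct and uses essentially the same machinery as the paper: Lemma \ref{lem:x}, the unit shift along the suffix $\pi[a_i,\ldots,a_j]$, and the budget accounting at price $p^{\pi}$ together with $b^*_{a_j}\geq v_{a_1}$ and $v_{a_1}>p^{\pi}$. The only difference is organizational --- the paper splits into the cases $d_{a_j}>X(a_j)$ (contradicting the $a_i$-avoid property) and $d_{a_j}=X(a_j)$ (contradicting $b^*_{a_j}\geq v_{a_1}$), whereas you run the same inequality in the contrapositive direction to show $d_{a_j}\geq X(a_j)+1$ always holds and obtain the single avoid-matching contradiction, which is a legitimate repackaging of the identical argument.
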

\begin{proof}
Assume $\pi\in \Pi_{\neg{V}} \neq \emptyset$ and let $a^{\pi} = a_i$, $t^{\pi} = t_i$,
  we now seek to derive a contradiction as follows:
  \begin{itemize} \item When $Y^{\pi}$ was computed there was also an an alternate full matching $Y'$ with fewer
  items assigned to agent $a_i$, contradicting the assumption that $Y^{\pi}$ is an $a_i$ avoid matching. Or,
  \item We show that the remaining budget of agent $a_j$ at the end of the auction, $b^*_{a_j}$, has $b^*_{a_j} < v_1$, contradicting the assumption that $\pi$ is
  a trading path.
  \end{itemize}

Let $X$ be a matching as in Lemma \ref{lem:x} and $F^{\pi}$ be as defined in Definition \ref{def:suppi}. Also, let $X(a)$, $F^{\pi}(a)$, be the
number of items assigned to agent $a$ in full matchings $X$, $F^{\pi}$, respectively.

We consider the following cases regarding $d_{a_j}$ when $Y^{\pi}$, the $a_i$-avoid matching, was computed:
\begin{enumerate}
  \item $d_{a_j}>X(a_j)$: then, like in Lemma \ref{lem:SellV}, we can decrease the number of items sold to $a_i$ by assigning
  item $t_k$ to agent $a_{k+1}$ for $k=i, \ldots, j-1$, without exceeding the $d_{a_j}$ demand constraint.
  \item $d_{a_j} = X(a_j)$, by subcase analysis we show that $b^{\pi}_{a_j} \leq (X(a_j)+1)p^{\pi}$:
  \begin{enumerate}
    \item $D_{a_j} = D^+_{a_j}$: Observe that $X(a_j) < m$, the current number of unsold items. This follows because $X(a_i) = Y^{\pi}(a_i) \geq 1$ by assumption
    that $t_i$ was assigned to $a_i$ in $Y^{\pi}$. This means that $d_{a_j} = X(a_j) < m$ so  \begin{eqnarray*} X(a_j) &=& d_{a_j} =
    \left\lfloor b^{\pi}_{a_j}/p^{\pi} \right\rfloor > b^{\pi}_{a_j}/p^{\pi} -1 \\
    \Rightarrow b^{\pi}_{a_j} &<& (X(a_j)+1)p^{\pi}. \end{eqnarray*}
    \item $D_{a_j} \neq D^+_{a_j}$: Observe that $a_j \notin V$ as $v_{a_j} > v_{a_i}$ and $a_i \notin V$.
    As $a_j\notin V$, the only reason that $D_{a_j} \neq D_{a_j}^+$ is because the remaining budget of agent $a_j$, $b^{\pi}_{a_j}$, is
    an integer multiple of the current price $p^{\pi}$. Then, $D_{a_j}^+ = D_{a_j}-1$ and $D_{a_j} = \lfloor b^{\pi}_{a_j}/p^{\pi} \rfloor =
    b^{\pi}_{a_j}/p^{\pi}$, it follows that  \begin{eqnarray*} X(a_j) &=& d_{a_j} \geq D_{a_j}^+ = D_{a_j} -1 = b^{\pi}_{a_j}/p^{\pi} -1 \\
    \Rightarrow b^{\pi}_{a_j} &\leq& (X(a_j)+1)p^{\pi}. \end{eqnarray*}
  \end{enumerate}
  Note that the current price $p^{\pi}<v_{a_i}$ because we assume that $a_i$ was sold $t_i$ as a result of Sell($a_i$) and not Sell($V$). It is
  also true that $v_{a_i} \leq v_{a_1}$ as $(a_i,t_i)$ was the first edge that was sold along $\pi$.
  By condition 3 of Lemma \ref{lem:x} we can deduce that
$$b^{\pi}_{a_j} \leq (X(a_j) + 1) p^{\pi} = (F^{\pi}(a_j)+1)p^{\pi}.$$
Agent $a_j$ is sold exactly $F^{\pi}(a_j)$ items at a price not lower that $p^{\pi}$, to at the end of the auction the remaining budget for
agent $a_j$, $b^*_{a_j}$, is $\leq p^{\pi}$.  This contradicts the assumption that $\pi$ is a trading path since $$b^*_{a_j} \leq p^{\pi} <
v_{a_i} \leq v_{a_1}.$$
\end{enumerate}

\end{proof}

\section{Mapping the Frontier} \label{sec:impossible}

In this paper we gave a mechanism that is incentive compatible with respect to valuation, and produces a Pareto-optimal allocation, but with various annoying restrictions and assumptions:
\begin{itemize} 
\item we assume public budgets;
\item we assume public sets of interest;
\item moreover, agents are restricted to have a step function valuation for items, if the item is in $S_i$ then it's valuation is $v_i$, otherwise zero.
\end{itemize}

This poses the question: can we remove these annoying assumptions/restrictions? Just how far can we go?

As for private budgets, it was shown by \cite{DBLP:conf/focs/DobzinskiLN08} that even for the multi unit case, one cannot achieve incentive compatibility with respect to valuation along with bidder rationality, auctioneer rationality, and obtain a Pareto-optimal allocation.

We argue that even if one assumes public budgets, the other restrictions are also necessary. This is summarized in the following theorems:

\begin{theorem} \label{thm:noprivateS}
There is no truthful, bidder rational, auctioneer rational and
Pareto-optimal auction with
public budgets, $b_a$, private valuations, $v_a$, and private sets of
interest, $S_a$.
\end{theorem}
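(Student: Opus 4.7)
The plan is to assume toward contradiction that a truthful, bidder-rational, auctioneer-rational, and Pareto-optimal mechanism $\mathcal{M}$ exists on instances with public budgets but private $(v_a,S_a)$, and then derive a contradiction from a small, carefully chosen instance. The instance has two items $t_1,t_2$ and two bidders (a third auxiliary bidder added if needed) with public budgets. I would hold bidder~$2$'s private type fixed and compare two scenarios~I and~II that differ only in bidder~$1$'s private data: say $(v_1,S_1)=(v,\{t_1\})$ in Scenario~I and $(v_1,S_1)=(v,\{t_1,t_2\})$ in Scenario~II. The parameters $v$, the public budgets, and bidder~$2$'s type are chosen so that the no-trading-path characterization of Theorem~\ref{thm:paths} (together with bidder and auctioneer rationality, and the budget cap) eliminates every candidate allocation in each scenario except one; in particular this forces bidder~$1$'s bundle to be $M_1^I=\{t_1\}$ in~I and $M_1^{II}=\{t_1,t_2\}$ in~II.

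I would then invoke the taxation principle: because bidder~$2$'s report is the same in the two scenarios, both outcomes $(M_1^I,P_1^I)$ and $(M_1^{II},P_1^{II})$ are on bidder~$1$'s menu no matter which scenario is the true one. Writing truthfulness in each true-type scenario gives
\[
v\cdot|M_1^I\cap S_1^I|-P_1^I \;\geq\; v\cdot|M_1^{II}\cap S_1^I|-P_1^{II},
\]
\[
v\cdot|M_1^{II}\cap S_1^{II}|-P_1^{II} \;\geq\; v\cdot|M_1^I\cap S_1^{II}|-P_1^I.
\]
Substituting the PO-forced bundles turns these into a two-sided bound of the form $0\leq P_1^{II}-P_1^I\leq v$. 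The contradiction arises by combining this window with (a)~the budget cap $P_1^{II}\leq b_1$ and (b)~a lower bound on $P_1^{II}$ that Pareto-optimality itself forces (to close the trading-path edge that would otherwise exist from bidder~$2$ back to bidder~$1$ in every allocation that differs from the intended PO one in Scenario~II). Together these constraints leave no feasible pair $(P_1^I,P_1^{II})$.

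The main obstacle is the first step --- pinning Pareto-optimality down to a unique allocation in each scenario. Generically the PO set carries a range of prices, so the construction must arrange that every alternative allocation harbours a trading path. The natural tactic is to make bidder~$2$'s (or an auxiliary bidder's) valuation strictly less than but comparable to $v$, and to pick her budget comparable to $b_1$, so that in any non-intended allocation a simple alternating path exists from a low-valuation item-holder to a high-valuation bidder whose remaining budget is still at least the low-valuation end's $v$, thereby triggering Theorem~\ref{thm:paths}. Getting this calibration right---so that the intended PO allocation is the unique survivor while the ``wrong'' ones all carry a trading path, and so that the resulting lower bound on $P_1^{II}$ is strictly larger than $P_1^I+v\cdot|M_1^{II}\cap S_1^I-M_1^I\cap S_1^I|$ permitted by truthfulness---is the delicate part; once that is accomplished, the truthfulness inequalities and the budget constraint close the argument.
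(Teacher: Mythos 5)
There is a genuine gap, and it sits exactly where you flagged ``the main obstacle'': the claim that Pareto-optimality (plus rationality and the budget cap) pins the mechanism down to a unique outcome in each of your two scenarios. In this budgeted setting Pareto-optimality constrains allocation--payment \emph{pairs} jointly (via the no-trading-path condition, which depends on the bidders' \emph{remaining} budgets), so each instance admits a continuum of Pareto-optimal outcomes rather than a single one. In your Scenario II, for example, giving both items to bidder $2$, or one item to each bidder, can each be Pareto-optimal for suitable payments; nothing in Theorem \ref{thm:paths} forces $M_1^{II}=\{t_1,t_2\}$. Without the forced bundles, the taxation-principle window $0\le P_1^{II}-P_1^{I}\le v$ has nothing to collide with, and the calibration you defer to (``getting this right is the delicate part'') is precisely the part that cannot be done with Pareto-optimality alone.

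The paper closes this gap with a much stronger tool that your sketch has no analogue of: the uniqueness theorem of Dobzinski--Lavi--Nisan (Theorem \ref{thm:dnl-uniqueness}), which says that for two bidders with generic public budgets and distinct valuations, \emph{any} truthful, bidder-rational, auctioneer-rational, Pareto-optimal multi-unit auction must reproduce the clinching auction's allocation \emph{and prices}. This uses truthfulness with respect to valuations across all reports, not Pareto-optimality on a single instance, and it is what pins down the outcome (each bidder gets one item, $p_1=3$, $p_2=2$ with $v_1=10,v_2=11,b_1=4,b_2=5$) when bidder $1$ reports $S_1=\{t_1,t_2\}$. For the other scenario ($S_1=\{t_1\}$), the paper proves separate lemmas (Lemmas \ref{lem:lower-1}--\ref{lem:lower-3}) that again combine incentive compatibility with Pareto-optimality --- not Pareto-optimality alone --- to show bidder $1$ receives nothing and pays nothing, hence gets zero utility truthfully, while misreporting $S_1$ yields strictly positive (expected) utility; the expectation also handles randomized mechanisms, another point your deterministic menu argument does not address. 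So your high-level shape (compare two reports of bidder $1$, exploit truthfulness across them) is in the right spirit, but the decisive forcing step is missing and cannot be supplied by the no-trading-path characterization by itself; some uniqueness-type result driven by incentive compatibility is needed.
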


Proof in Appendix \ref{app:noprivateS}.

\begin{corollary} \label{thm:noprivatev}
There is no truthful, bidder rational, auctioneer rational and
Pareto-optimal auction with
public budgets, $b_a$, and private item-dependent valuations $v_{at}$.
\end{corollary}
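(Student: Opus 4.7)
The plan is a direct reduction to Theorem~\ref{thm:noprivateS}. Suppose for contradiction that there is a truthful, bidder rational, auctioneer rational, Pareto-optimal auction $\mathcal{M}$ with public budgets and private item-dependent valuations $v_{at}$. I will use $\mathcal{M}$ as a black box to construct an auction $\mathcal{M}'$ with the same properties but with public budgets and private $(v_a, S_a)$, contradicting Theorem~\ref{thm:noprivateS}.

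Given an instance of the private-$S_a$ problem, $\mathcal{M}'$ collects reported pairs $(\tilde v_a, \tilde S_a)$ from each agent, sets
\[
\tilde v_{at} \;=\; \tilde v_a \cdot \mathbf{1}[t \in \tilde S_a],
\]
and invokes $\mathcal{M}$ on the item-dependent profile $(\tilde v_{at})$ with the public budgets $(b_a)$. The output allocation and payments of $\mathcal{M}$ are returned as the output of $\mathcal{M}'$. The key observation is that when agent $a$'s true type is $(v_a^*, S_a^*)$, the induced additive valuation $v_{at}^* = v_a^* \mathbf{1}[t \in S_a^*]$ on any set $R$ satisfies $\sum_{t \in R} v_{at}^* = v_a^* \cdot |R \cap S_a^*|$, which exactly matches the step-function valuation of the private-$S_a$ model. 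Hence utilities, bidder rationality, and auctioneer rationality carry over verbatim from $\mathcal{M}$ to $\mathcal{M}'$.

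For Pareto-optimality, note that the sets of allocations and payment vectors available in the two models are identical, and valuations agree as above. So any allocation $(M',P')$ that Pareto-dominates $\mathcal{M}'$'s output with respect to $(v_a^*, S_a^*)$ would Pareto-dominate $\mathcal{M}$'s output with respect to $(v_{at}^*)$, contradicting Pareto-optimality of $\mathcal{M}$. For truthfulness, fix agent $a$ and any reports of the other agents. Under $\mathcal{M}$, truthful reporting of $v_{at}^*$ is a dominant strategy, so $a$'s utility at $v_{at}^*$ is at least $a$'s utility at any alternative item-dependent report $\tilde v_{at}$. In particular it is at least $a$'s utility at any report of the form $\tilde v_a \mathbf{1}[t \in \tilde S_a]$ that $\mathcal{M}'$ could produce from a deviation $(\tilde v_a, \tilde S_a)$. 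Thus reporting $(v_a^*, S_a^*)$ truthfully in $\mathcal{M}'$ is dominant.

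The resulting $\mathcal{M}'$ is therefore a truthful, bidder rational, auctioneer rational, Pareto-optimal auction for public budgets with private $v_a$ and private $S_a$, contradicting Theorem~\ref{thm:noprivateS}. The only subtle step is the truthfulness argument, and the main point there is simply that the deviation space available to an agent in $\mathcal{M}'$ (step-function reports) is a subset of the deviation space available in $\mathcal{M}$ (arbitrary additive reports), so dominance in $\mathcal{M}$ implies dominance in $\mathcal{M}'$; everything else is immediate from the identity of valuations under the embedding $v_{at} = v_a \mathbf{1}[t \in S_a]$.
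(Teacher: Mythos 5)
Your proposal is correct and takes essentially the same route as the paper, which also derives the corollary by specializing the item-dependent setting to the embedding $v_{at}=v_a\cdot\mathbf{1}[t\in S_a]$ and invoking Theorem~\ref{thm:noprivateS}. You merely spell out the details the paper leaves implicit, in particular the (valid) observation that truthfulness survives the restriction because an agent's deviation space in the step-function model is a subset of the deviations available to the general mechanism.
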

\begin{proof}
  This follows immediately from Theorem \ref{thm:noprivateS}. Consider the case where the private valuations $v_{at}$ are zero for any $t\notin S_a$, and $v_a$ for $t\in S_a$.
\end{proof}

\bibliographystyle{plain}
\bibliography{clinch}

\appendix
\newpage

\begin{table}[h]
\begin{center}
\begin{tabular}{|@{$\quad$}l@{$\quad$}|p{11.5cm}|}
\hline & \\
Notation & Explaination \\ & \\ \hline
$n$ & Number of agents \\
$m$ & Current number of items  \\
$S_a$  & Items agent $a$ is interested in \\
$v\in \Re^m$ & $v_a>0$ is the valuation of agent $a$ for the items in $S_{a}$ \\
$b \in \Re^m$ & $b_a$ is the current budget for agent $a$ \\
$p \in \Re^+$ & The  current price \\
$A$ & Current  active agents ($d_a >0$)\\
$V$ & Current  value limited agent ($d_a>0, v_a = p$) \\
$U$ & Current set of unsold items\\
$D_a$ & $\left\{\begin{array}{ll}
{\tt min} \{m,\lfloor b_i/p \rfloor\} & {\rm if\ } p\leq v_i  \\
0 & {\rm if\ } p>v_i
\end{array} \right.$  \\
$D^+_a$ & $D_a$ at infinitesimally higher price than $p$\\
$d_a$ & $D_a$ if $H_a=\rm{True}$, $D^+_a$ otherwise\\
$H_a$ & Boolean value, if true $d_a=D_a$, OW $d_a=D_a^+$ \\
$(M^*,P^*)$ & The matching and payments resulting from the auction \\
$M_i$ & The number of items sold to agent $i$ in matching $M$ \\
$P_i$ & The total payment by agent $i$ given payment vector $P\in \Re^n$ \\
$\Pi$ & The set of all trading paths in $M^*$ \\
$\pi \in \Pi$ & A trading path $(a_1, t_1, \ldots, a_{j-1}, t_{j-1}, a_j)$ \\
$\pi[a_i,\ldots,a_j]$ & A suffix of $\pi$: $(a_i, t_i, \ldots, a_j)$ \\
$V^{\pi}$ & First time any edge was sold from  $\pi$ was during Sell($V^{\pi}$) \\
$a^{\pi}$ & First time any edge was sold from  $\pi$ was during Sell($a^{\pi}$) \\
$Y^{\pi}$ & Either $V^{\pi}$-avoid matching or $a^{\pi}$-avoid matching \\
$\Pi_V$ & First time any edge was sold from  $\pi\in \Pi_V$ was during Sell($V^{\pi}$) \\
$\Pi_{\neg V}$ & First time any edge was sold from  $\pi\in \Pi_{\neg V}$ was during Sell($a^{\pi}$) \\
$b_a^{\pi}$ & Budget of agent $a$ before 1st time any edge sold from $\pi$ \\
$b_a^*$ & Remaining budget of agent $a$ at end of auction \\
$B(\neg S)$ & $\#$ items assigned to agents in $A\setminus S$ in $S$-avoid matching
 \\ \hline
\end{tabular}
\end{center}
\caption{Notation Used} \label{tab:notation}
\end{table}

\appendix

\section{Proof of Lemma \ref{lem:sells-all}}
\label{app:sells-all}

\begin{proof}
We prove that throughout the auction, there is always a matching that can sell all remaining items at the current price without exceeding the
budget of any agent. As prices only increase, eventually all items must be sold. The lines below refer to Algorithm \ref{alg:combclinch} unless
stated otherwise.

Initially, all items can be sold at price zero. The $d_a$ capacity constraints are all equal to $m$.

Furthermore, we argue that is is always true that all unsold items can be sold to active agents at the current price without violating the
capacity constraints. We prove this invariant by case analysis of the following events:
\begin{itemize} \item Increase in price followed by setting the $H_a$ variables to $\rm{True}$: The repeat loop in lines \ref{aucline:repstart}
-- \ref{aucline:repend} ends with $H_a\gets\rm{False}$ and $B(\neg\{a\})\geq m$ for all agents $a$. Thus, when the condition in line
\ref{aucline:repend} is met, all the $d_a$'s are set to $D_a^+$.

Any increment in price in line \ref{aucline:incprice} will set $D_a$ equal to the previous $D_a^+$ and the subsequent assignment of
$H_a\gets\rm{true}$ (line \ref{aucline:hitrue}) means that the new $d_a's$ are equal to the old ones. Thus, any matching valid at the old price
is valid at the new price.
\item The Sell($V$) operation (line \ref{line:SellV} of Algorithm \ref{alg:combclinch}, Algorithm \ref{alg:sell}) sells items to
agents in $V$ only if all other unsold items can be matched to agents not in $V$.
\item Setting $H_a\gets \rm{False}$ for $a\in V$ (line \ref{line:VHi}) sets $d_a=0$ for $a\in V$ and this is OK because nothing will be sold to
$a\in V$ at any higher price.
\item The Sell($a$) operation (line \ref{line:Selli} of Algorithm \ref{alg:combclinch}, Algorithm \ref{alg:sell})
sells items to agent $a$ only if all other unsold items can be matched to other agents.
\item Setting $H_a \gets \rm{False}$ (line \ref{line:Hifalse}) is done only if $B(\neg\{a\})\geq m$, {\sl i.e.},
all unsold items can be matched to the other agents (not
including $a$).
\end{itemize}

Thus, the mechanism will sell all items.
\end{proof}

\section{Proof of Theorem \ref{thm:paths}}
\label{app:paths}

\begin{proof} Let $Q$ be the predicate that $(M,P)$ is Pareto-optimal, $R_1$ be the predicate that all items are sold in $(M,P)$, and $R_2$ the
predicate that there are no trading paths in $G$ with respect to $(M,P)$. We seek to show that $Q \Leftrightarrow R_1 \cap R_2$.

$Q \Rightarrow (R_1\cap R_2)$: to prove this we show that $(\neg R_1 \cup \neg R_2) \Rightarrow \neg Q$.

If both $R_1$ and $R_2$ are true then this becomes $\mathrm{False} \Rightarrow Q$ which is trivially true.

If the allocation
$(M,P)$ does not assign all items ($\neg R_1$) then it is clearly not Pareto-optimal ($\neg Q$). We can get a better allocation by assigning all
unsold items to any agent $i$ with such items in $S_i$. This increases the utility of agent $i$.

If $\neg R_2$ then there exists a trading path in $G$ with respect to $(M,P)$, let this path be $\pi = (a_1, t_1, a_2, t_2, \ldots, a_{j-1}, t_{j-1}, a_j)$, as $v_{a_j} > v_{a_1}$ and $b^*_{a_j} \geq v_{a_1}$ then we can decrease the payment of agent $a_1$ by $v_{a_1}$, increase the payment of agent $a_j$ by the same $v_{a_1}$, and move item $t_i$ from agent $a_i$ to agent $a_{i+1}$ for all $i=1,\ldots, j-1$. In this case, the utility of agents $a_1, a_2, \ldots, a_{j-1}$ is unchanged, the utility of agent $a_j$ increases by $v_{a_j}-v_{a_i} > 0$, and the utility of the auctioneer is unchanged. The sum of payments by the agents is likewise unchanged. This contradicts the assumption that $(M,P)$ is Pareto optimal.

We now seek to prove that $(R_1 \cap R_2) \Rightarrow Q$. We note above that if not all items are allocated ($\neg R_1$) then the allocation is
not Pareto-optimal ($\neg Q$), thus $Q \Rightarrow R_1$ and (trivially) $Q \Rightarrow Q \cap R_1$ (Pareto optimality implies all items
allocated). Thus, $(R_1 \cap R_2) \Rightarrow Q \Rightarrow Q \cap R_1$. If $R_1$ is false this predicate becomes $\mathrm{False} \Rightarrow
\textrm{False}$, thus we remain with the case where all items are allocated.

Assume $\neg Q$, {\sl i.e.}, assume that $(M,P)$ is not Pareto-optimal --- then there must be some other allocation $(M',P')$ that is no worse
for all players (including the auctioneer) and strictly better for at least one player. We can assume that $(M',P')$ assigns all items as well,
as otherwise we can take an even better allocation that would assign all items.

By Lemma~\ref{lem:paths-and-cycles} (see below) we know that $M$ and $M'$ are related by a set of simple paths and cycles. On a path, the first agent gives up one item, whereas the last agent receives one item more, after items are exchanged along the path. Cycles represents giving up one item in
return for another by passing items around along it. Cycles don't change the number of items assigned to the bidders along the cycles so we will ignore them.
$x_1,\ldots,x_z$ and $y_1,\ldots, y_z$ denote the start and end agents along these $z$ alternating paths.
Note that the same agent may appear multiple times amongst $x_i$'s or multiple times amongst $y_i$'s, but cannot appear both as an $x_i$ and as a $y_i$ (we can concatenate two such paths into one). Such an alternating path
represents a shuffle of items between agents where agent $x_j$ looses an item whereas agent $y_j$ gains an item  when  moving from $M$ to $M'$.
In general, these two items may be entirely different.

Since there are no trading paths with respect to $(M,P)$, it must be the case that for every one of these $z$ alternating paths either
\begin{enumerate} \item[$\alpha$.]   $v_{y_j} \leq v_{x_j}$ holds. Define $I = \{j | v_{y_j} \leq v_{x_j}\}$.
\item[$\beta$.] $b^*_{y_j} < v_{x_j}$ holds (where $b^*_{y_j}$ is the budget left over for agent $y_j$ at the end of the mechanism).
 Define $J =\{ j | b^*_{y_j} < v_{x_j}\}$.
 \end{enumerate}

Now, no bidder is worse off in $(M',P')$ (in comparison to $(M,P)$), and the auctioneer is no worse off, and, by assumption, either/or \begin{enumerate} \item[A.] Some bidder is strictly better off. Or, \item[B.] The auctioneer is strictly better off.
\end{enumerate}

First, we rule out case B above:
 Consider the process of changing $(M,P)$ into $(M',P')$ as a two stage process: at first, the agents $x_1, \ldots, x_z$ give up items. During this first stage, the payments made by agents $x_1, \ldots, x_m$ must decrease (in sum) by at least $Z^-=\sum_{i=1}^z{v_{x_i}}$. The 2nd stages is that agents ${y_1,\ldots,y_z}$ receive their extra items. In the 2nd stage, the maximum extra payment that can be received from agents ${y_1,\ldots,y_z}$ is no more than \begin{equation} Z^+=\sum_{j\in I} v_{y_j} + \sum_{j\in J} b^*_{y_j} \leq \sum_{j\in I} v_{x_j} + \sum_{j\in J} v_{x_j} = Z^-,\label{eq:zplezm}\end{equation} by definition of sets $I$ and $J$ above. Thus, the total increase in revenue to the auctioneer is $Z^+ - Z^- \leq 0$. This rules out Case B above (auctioneer strictly better off). Moreover, as the auctioneer cannot be worse off, $Z^+ = Z^-$ and from Equation (\ref{eq:zplezm}) we conclude that \begin{equation}\sum_{j\in I} v_{y_j} + \sum_{j\in J} b^*_{y_j} = \sum_{j\in I} v_{x_j} + \sum_{j\in J} v_{x_j}. \label{eq:equal}\end{equation}

From $\alpha$ above, we have that $v_{y_j} \leq v_{x_j}$ for $j\in I$, from $\beta$ be have that $b^*_{y_j} < v_{x_j}$ for $j\in J$.
Thus, if $J\neq \emptyset$ then the lefthand side of Equation (\ref{eq:equal}) is strictly less than the righthand side, a contradiction.

Therefore, case A must hold and it must be that $J=\emptyset$, we will conclude the proof of the theorem by showing that these two are inconsistent. So, we have that
\begin{eqnarray*} M'_a  v_{a} - P'_{a} &=& M_a v_a - P_{a} \quad \mbox{\rm for agents $a$ whose utility is unchanged}\\
M'_{\hat{a}}  v_{\hat{a}} - P'_{\hat{a}} &>& M_{\hat{a}} v_{\hat{a}} - P_{\hat{a}} \quad \mbox{\rm for some agent $\hat{a}$}\\
\sum_a P'_a &=& \sum_a P_a.
\end{eqnarray*}

We can now derive that
\begin{eqnarray}
   \sum_a M'_a  v_{a}   &>& \sum_a M_a v_a  - \left(\sum_a P'_a - \sum_a P_a\right) \nonumber\\
  &=& \sum_a M_a v_a. \nonumber \\
    \Rightarrow \quad \sum_a (M'_a - M_a) v_a &>& 0. \label{eq:sumdiffm}\end{eqnarray}

  Now, whenever $a = x_j$ we decrease $M'_a - M_a$ by one, whenever $a= y_j$ we increase $M'_a - M_a$ by one.
  Thus, rewriting Equation (\ref{eq:sumdiffm}) we get that
  \begin{eqnarray} \sum_a  (|\{j| a=y_j\}| - |\{j|a=x_j\}|) v_a &>& 0 \nonumber \\
  \Rightarrow \sum_{j=1}^z v_{y_j} - \sum_{j=1}^z v_{x_j} &>& 0 \nonumber\\
  \Rightarrow \sum_{j=1}^z v_{y_j} &>& \sum_{j=1}^z v_{x_j}. \label{eq:sumj1z}
   \end{eqnarray}

But, Equation (\ref{eq:sumj1z}) is inconsistent with Equation (\ref{eq:equal}) as $J=\emptyset$ implies that $I=\{1, \ldots, z\}$.

\end{proof}

The following technical lemma was required in the proof of Theorem \ref{thm:paths} above:
\begin{lemma}
\label{lem:paths-and-cycles} Let $M$ and $M'$ be two $B$-matchings that allocate all items, then, the symmetric difference between these two
matchings, $M \oplus M'$, can be decomposed into a set of simple alternating paths (with respect to $M$) and alternating cycles (also with
respect to $M$) that are edge disjoint. Moreover, there are no two simple alternating paths such that one ends and the other begins at the same
agent.
\end{lemma}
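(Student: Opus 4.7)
My plan is to treat $M \oplus M'$ as a directed bipartite multigraph and then apply a standard Eulerian-style decomposition before cleaning up for simplicity. First I will observe that, since both $M$ and $M'$ match every item, each item $t$ is incident to exactly one $M$-edge and one $M'$-edge, which either coincide (and cancel in the symmetric difference) or are distinct (in which case $t$ has degree exactly two in $M\oplus M'$, one edge of each type). I will orient every surviving $M$-edge from its agent to its item and every surviving $M'$-edge from its item to its agent. In this directed view every item is balanced, while each agent $a$ satisfies $\mathrm{out}(a)-\mathrm{in}(a)=M_a-M'_a$.

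Second, I will run the usual walk-plus-cycle extraction. While some agent $a$ still has positive excess I start a directed walk at $a$ along an unused out-edge and, at each item entered, continue along its unique remaining out-edge; because items are balanced, the walk can only terminate at an agent $b$ of negative excess. I then delete the walk's edges, update excesses, and repeat until all excesses vanish. The leftover subgraph is Eulerian, so it decomposes into edge-disjoint directed cycles. Because at every item the orientation forces a switch between an $M$-edge and an $M'$-edge, each extracted walk is automatically an alternating path in the paper's sense, with its first edge in $M$ as required, and each extracted cycle is an alternating cycle.

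Third, to enforce simplicity I will post-process: whenever an extracted walk $\omega$ revisits some vertex $v$, I cut $\omega$ at the first such repetition into a prefix ending at $v$, a closed sub-walk returning to $v$, and a suffix starting at $v$. The closed sub-walk is set aside as an additional alternating cycle, and I iterate on the strictly shorter concatenation of prefix and suffix; non-simple cycles obtained in the previous step are handled the same way.

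Finally, the moreover claim is essentially free from the construction: an agent $a$ appears as the starting vertex of some path only when its excess was positive ($M_a>M'_a$) and as the ending vertex only when its excess was negative ($M_a<M'_a$), and these conditions are mutually exclusive. The main obstacle I anticipate is the third step, namely checking that splitting a walk at a repeated vertex preserves alternation. This should go through because the forced alternation at items implies that any sub-walk between two consecutive visits to an agent $v$ has even length, so the two edges bounding it at $v$ are of opposite $M/M'$ types, which is precisely what is needed both for the extracted closed sub-walk to be an alternating cycle and for the residual walk to remain alternating.
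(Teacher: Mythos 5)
Your proposal is correct and takes essentially the same route as the paper: both orient the $M$-edges from agents to items and the $M'$-edges from items to agents, use the fact that every item has degree $0$ or $2$ in $M\oplus M'$ to force path endpoints onto agents, get alternation for free from the bipartite orientation, and derive the ``moreover'' clause from a maximality/excess argument. Your Eulerian-style walk extraction and the splitting step simply make explicit the standard path-and-cycle decomposition that the paper invokes without proof.
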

\begin{proof}
Intuitively, the set $M \oplus M'$ relates $M$ to $M'$ and shows how to change one matching into another. To prove the lemma, direct edges in
$M$ from agents to items and edges in $M'$ from items to agents. Denote the resulting graph as $\vec{G}$. Any directed graph (and $\vec{G}$ in
particular) can be decomposed into a set of simple paths and cycles, such that no two simple paths start and end in the same vertex, {\sl i.e.},
maximal length simple paths.

To prove that such paths cannot start or end at an item, recall that both $M$ and $M'$ allocate all items. Thus, every item is adjacent to one
edge in $M$ and one edge in $M'$, so in $M\oplus M'$ it is adjacent to either zero or to 2 edges. Should we assume that some path starts at an
item, this  contradicts our assumption of maximal paths in $\vec{G}$.  A similar argument shows that no path can end at an item. Therefore, all
paths start and end at an agent. The maximality of the paths in $\vec{G}$ also shows that there are no two paths such that one ends and the
other begins at the same agent.

Along any such path or cycle, there can be no two consecutive edges from $M$ and there can be no two consecutive edges from $M'$. Also, for all
edges in $M\oplus M'$ between an agent $i$ and an item $j$, it must be that $j\in S_i$. Thus all maximal paths and all cycles covering $\vec{G}$
are alternating paths with respect to $M$. We also remark that should we reverse the direction of the paths and cycles then they will be
alternating paths with respect to $M'$.

\end{proof}

\section{Discussion and Remarks}
\label{subsec:intrem}

We hope that the following remarks may prove helpful:
\begin{enumerate}
    \item In the definition of Pareto-optimality (Definition \ref{def:pareto-optimal}), one allows any alternative allocation and pricing. If (for example) we were to redefine Pareto optimality, defining ``Pareto-optimality" by appending to the sentence fragment ``for no other allocation $(M',P')$" the suffix ``{\sl such that $P'_i\geq 0$ for all $i$}". Then, ``Pareto-optimal" assignments could in fact contain trading paths. Such trades would be ``illegal"  because they would violate the no positive transfers condition ($P'_i \geq 0$).
  \item Pareto-optimality as given in Definition \ref{def:pareto-optimal} is a more desirable social goal than ``Pareto-optimal" invented above. If we only insisted on a ``Pareto-optimal" assignment, then we could get very bad assignments. Later, subsequent to the auction, the bidders could trade amongst themselves and improve their lot.
      \item However, it may also be desirable that no agent actually get paid from the mechanism. Thus, it may be desirable that the actual allocation produced by the action have no positive transfers ($P_i \geq 0$ for all $i$), yet at the same time be Pareto-optimal in the strong sense of Definition \ref{def:pareto-optimal}: after the allocation is presented, no agents will desire to trade amongst themselves. This is the claim of Theorem \ref{thm:main}.
  \end{enumerate}

\section{Proof of Theorem \ref{thm:noprivateS}}
\label{app:noprivateS}

For the proof of  Theorem \ref{thm:noprivateS}  (up to by not including Corollary \ref{thm:noprivateS}) we assume the step function valuations (as done throughout this paper).

Recall the uniqueness result of~\cite{DBLP:conf/focs/DobzinskiLN08}:

\begin{theorem}[Theorem 5.1 of \cite{DBLP:conf/focs/DobzinskiLN08}]
\label{thm:dnl-uniqueness}
Let $A$ be a truthful, bidder-rational, auctioneer rational, and Pareto-optimal multi unit auction (identical items) with $2$ players with known (public) budgets $b_1$, $b_2$ that are generic\footnote{Not all pairs of values are generic, but for our purposes assume that this holds for every such pair.} then if $v_1 \neq v_2$ the allocation produced by $A$ is identical to that produced by the Dynamic clinching auction of \cite{DBLP:conf/focs/DobzinskiLN08} (and, in particular, with our auction when applied to these inputs).
\end{theorem}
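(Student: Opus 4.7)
Proof Proposal. The plan is to prove Theorem~\ref{thm:dnl-uniqueness} by combining a Myerson-style characterization with the Pareto-optimality analysis developed in Theorem~\ref{thm:paths}, specialized to the two-bidder multi-unit regime.

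First I would set the stage: fix public generic budgets $b_1, b_2$ and, without loss of generality, assume $v_1 > v_2$. For any mechanism $A$ satisfying the stated properties, truthfulness implies (via a Myerson-style monotonicity argument adapted to budgeted multi-unit auctions) that for each fixed $v_1$ the number of items $M_2(v_2)$ allocated to agent 2 is a non-decreasing step function of $v_2$, with the $k$-th upward jump at some threshold $\theta_k$ depending on $v_1, b_1, b_2$; bidder-rationality pins $P_2(v_2 = 0) = 0$, so the payment schedule is determined entirely by these thresholds. A symmetric statement holds for agent 1.

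Next I would invoke Pareto-optimality via the trading-paths characterization of Theorem~\ref{thm:paths}. Specialized to the two-player multi-unit setting, a trading path reduces to ``agent $j$ holds an item while agent $i$ has $v_i > v_j$ and remaining budget at least $v_j$''. Pareto-optimality thus forces all items to be sold and rules out any allocation in which the higher-valuation agent has unused budget while the other holds items on which it could be undercut. Combined with the Myerson thresholds from the previous step, this sharply constrains where the steps $\theta_k$ can possibly lie.

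The core step is to show that the unique allocation-and-pricing schedule consistent with truthfulness, Pareto-optimality, and both rationality conditions is precisely the dynamic-clinching schedule. I would argue this inductively on the number of items $m$: the first clinching event in DC occurs at the price $p^*$ at which one agent's budget-constrained demand first drops below the total supply; the trading-paths condition together with the Myerson marginal-payment formula then forces $A$ to sell the same items to the same agent at exactly price $p^*$. Removing the sold items and shrinking the winner's budget accordingly yields a strictly smaller instance to which the inductive hypothesis applies, propagating the equality with DC through every subsequent clinching event.

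The main obstacle will be ruling out ``lateral'' pricing schedules---schedules that produce the same final assignment as DC but shuffle intermediate thresholds---which a priori might satisfy Myerson-monotonicity while just barely escaping the trading-paths obstruction at the boundary. Genericity of the budgets is essential here: it guarantees that all DC clinching prices are distinct and eliminates the degenerate coincidences under which an alternate schedule could agree with DC on the final allocation yet differ on payments. Making the inductive reduction rigorous in the presence of value-limited regimes (where $p$ approaches $v_2$ and agent 2 becomes inactive) will require separate but straightforward case analysis.
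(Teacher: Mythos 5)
You should first note that the paper does not prove Theorem~\ref{thm:dnl-uniqueness} at all: it is imported verbatim as Theorem~5.1 of \cite{DBLP:conf/focs/DobzinskiLN08}, and the text explicitly defers the full argument to \cite{DLNcorrected} (the FOCS version contained only a sketch, and the corrected proof is long and delicate). So there is no in-paper proof to match; the question is whether your sketch would stand on its own, and as written it does not. The decisive gap is your ``core step'': the claim that Myerson monotonicity plus the no-trading-path condition of Theorem~\ref{thm:paths} \emph{forces} the sale of the same items to the same agent at exactly the first clinching price $p^*$ is asserted, not argued, and that claim \emph{is} the content of the theorem. Pareto-optimality (no trading paths, all items sold) constrains only the final allocation/payment pair at each valuation profile; truthfulness constrains how the allocation varies across profiles. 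The whole difficulty in \cite{DLNcorrected} is interleaving these two kinds of constraints --- fixing one bidder's value, perturbing the other's, and using Pareto-optimality at the perturbed profiles to pin down the threshold prices one by one. Your sketch names this difficulty (the ``lateral'' schedules) but offers no mechanism for excluding them beyond an appeal to genericity, which by itself only guarantees distinctness of the clinching prices, not that the thresholds of $A$ coincide with them.

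A second, structural problem is the induction on the number of items. The mechanism $A$ is a single map from reported valuations to outcomes on the original instance; after the putative first clinching event there is no ``residual mechanism'' on the smaller instance that is known to be truthful, bidder-rational, auctioneer-rational and Pareto-optimal, so the inductive hypothesis has nothing to apply to. To make such a reduction rigorous you would have to prove that the first clinched quantity and price are the same function of the reports for every conforming $A$ (which is again the core step) \emph{and} that the continuation of $A$ inherits all four properties as a mechanism on the reduced instance --- neither is automatic, and the actual uniqueness proof in \cite{DLNcorrected} does not proceed by simulating a run of the clinching auction but by direct comparison of $A$'s outcomes across nearby valuation profiles. The Myerson-style monotonicity step and the specialization of the trading-path criterion to two bidders are fine as far as they go, but they are the easy part; as a proof of Theorem~\ref{thm:dnl-uniqueness} the proposal is a plan with its central lemma missing.
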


For all the details of the proof please see~\cite{DLNcorrected}, as the original publication~\cite{DBLP:conf/focs/DobzinskiLN08} includes only a sketch.


\subsection{Public budgets $b_i$, Private valuations $v_i$, and Private sets of interest $S_i$} \label{sec:nopublicsi}
We now show that there is no incentive compatible, Pareto-optimal, bidder rational, and auctioneer rational mechanism when the budgets are public, and the agent valuation and set of interest is private.

We say that an agent wins an item if the item is assigned to the agent.

Consider two agents, $1$ and $2$, and two items $t_1$, $t_2$. Let $S_{1} = \{t_1\}$ and $S_{2} = \{ t_1, t_2\}$. We now prove the following:
\begin{lemma}
\label{lem:lower-1}
Consider any incentive compatible, Pareto-optimal, bidder rational and auctioneer rational combinatorial auction that produces an allocation $(M,P)$:  if agent $2$ wins both items
than the payment $P_1$ by agent $1$ is zero.
\end{lemma}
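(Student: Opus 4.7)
The plan is to prove the two directions separately. First, since agent $1$ wins no items in $(M,P)$, bidder rationality gives $u_1 = 0 \cdot v_1 - P_1 = -P_1 \ge 0$, hence $P_1 \le 0$. The work is in the reverse direction. I will argue by contradiction, assuming $P_1 = -\alpha$ for some $\alpha > 0$, and deriving a contradiction by considering the counterfactual instance $I_B$ in which agent $1$ misreports $\tilde S_1 = \emptyset$ (while reporting the same $v_1$) and agent $2$ reports truthfully. Since agents can only be assigned items in their reported set, $M_1^{I_B} = \emptyset$; since agent $2$ is the unique interested agent for both items and Pareto-optimality forces all items to be sold (Theorem \ref{thm:paths}), agent $2$ wins both items in $I_B$.

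Step 1 will pin down $P_1^{I_B} = -\alpha$ using two-sided IC with respect to sets of interest. In the original instance $I_A$, the true agent $(v_1,\{t_1\})$ prefers the truthful report to the misreport $(v_1,\emptyset)$: the misreport induces the outcome of $I_B$, where $M_1 = \emptyset$, so the true utility of the misreport is $-P_1^{I_B}$. IC therefore gives $\alpha \ge -P_1^{I_B}$, i.e.\ $P_1^{I_B} \ge -\alpha$. Conversely, a hypothetical agent with true type $(v_1,\emptyset)$ in $I_B$ prefers its truthful report to the misreport $(v_1,\{t_1\})$: the misreport yields the $I_A$ outcome with $M_1 = \emptyset$ and $P_1 = -\alpha$, and the hypothetical agent's true utility there is $|M_1 \cap \emptyset|\,v_1 - (-\alpha) = \alpha$. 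IC then gives $-P_1^{I_B} \ge \alpha$, i.e.\ $P_1^{I_B} \le -\alpha$. Combining yields $P_1^{I_B} = -\alpha$.

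Step 2 will establish $P_2^{I_B} \le 0$ via a cross-instance IC argument. Consider the family $I(v_2') := (v_1, v_2', \emptyset, \{t_1,t_2\})$ indexed by $v_2' > 0$. In every such instance the allocation is forced by Pareto-optimality to assign both items to agent $2$, so the allocation is constant across the family. Applying IC for agent $2$ in both directions between $I(v_2')$ and $I(v_2'')$ (agent $2$ with true value $v_2'$ does not want to misreport $v_2''$, and symmetrically) collapses to $P_2^{I(v_2')} \le P_2^{I(v_2'')}$ and $P_2^{I(v_2'')} \le P_2^{I(v_2')}$, so the payment is a constant $P_2^*$ across the family. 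Bidder rationality $2v_2' - P_2^* \ge 0$ for every $v_2' > 0$ then forces $P_2^* \le 0$ (taking $v_2' \to 0^+$).

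Step 3 combines these with auctioneer rationality in $I_B$: $P_1^{I_B} + P_2^{I_B} = -\alpha + P_2^* \ge 0$ gives $P_2^* \ge \alpha > 0$, contradicting $P_2^* \le 0$. Hence $P_1 \ge 0$, and together with $P_1 \le 0$ we conclude $P_1 = 0$. The main obstacle is Step 1, where one must use IC with respect to $S_a$ in both directions to pin down a payment across instances that differ only in the reported set of interest; this is precisely where the privacy of the sets of interest enters, and it is the feature that distinguishes this impossibility from the public-$S_a$ setting handled by our positive result.
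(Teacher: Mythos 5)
Your proof is correct, but it takes a different route from the paper's. The paper's argument never touches the sets of interest: it first treats the degenerate valuation $v_1=0$, where Pareto-optimality forces both items to agent $2$, incentive compatibility in $v_2$ forces $P_2= 0$, and bidder rationality of agent $1$ plus auctioneer rationality ($-P_1\le P_2$) give $P_1=0$; it then observes that whenever agent $1$ wins nothing, IC in $v_1$ alone makes $P_1$ independent of the reported valuation, so $P_1=0$ in general. You instead make the degenerate coordinate the set of interest ($\tilde S_1=\emptyset$), pin $P_1$ across that perturbation with a two-sided IC-in-$S_1$ argument, kill $P_2$ in the perturbed family by IC in $v_2$ plus the bidder-rationality limit $v_2'\to 0^+$, and close with auctioneer rationality. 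Both proofs share the same skeleton (pass to a profile where PO trivially sends both items to agent $2$, control payments there, transfer back via IC), but the hypotheses used differ: the paper's proof needs truthfulness only with respect to valuations, so Lemma \ref{lem:lower-1} as proved there holds even for mechanisms that are not assumed truthful in $S_a$; your proof additionally invokes truthfulness with respect to $S_a$, which is legitimate in the setting of Theorem \ref{thm:noprivateS} (where $S_a$ is private) but is a strictly stronger assumption, and it also requires that an empty set of interest be an admissible type/report, a point the model leaves implicit (the covering assumption is still met since agent $2$ is interested in both items). This is a mild caveat rather than a gap --- comparable to the paper's own use of a zero valuation despite $v_a>0$ in the model, an edge case your limit argument in Step 2 actually avoids --- but if you want the lemma in its most portable form, the paper's valuation-only argument is preferable; your Step 2 and Step 3 are essentially the paper's $P_2$ and AR/BR bookkeeping transplanted to the modified instance.
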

\begin{proof}
First, consider the case when $v_1=0$. Then any incentive compatibility and Pareto-optimality auction has to assign both items
to agent $2$. If any of the items were to be left unassigned, or would be assigned to agent $1$, we could assign it to agent $2$, without changing any payment. This does not change the utility of agent $1$, nor the utility of the auctioneer, but would strictly increase  the utility of agent $2$.

Because of incentive compatibility, agent $2$ pays $P_2 = 0$. Otherwise, agent 2 could reduce his reported valuation and attain the item at a lower price. If follows from bidder rationality that $P_1 \le 0$ (we have not ruled out positive transfers yet). However,it follows from auctioneer rationality that agent one must pay zero, as $-P_1 \le P_2 = 0$.

Now, consider the case when both agents have nonzero valuations. Then for every instance in which agent $1$ gets no items it must be that $P_1 = 0$. By IC his payment cannot depend on his valuation, and when agent 1 reported a valuation of zero then $P_1$ was zero.
\end{proof}

\begin{lemma}
\label{lem:lower-2}
Consider any incentive compatible, Pareto-optimal, bidder rational and auctioneer rational combinatorial auction that produces an allocation $(M,P)$: if agent $2$ does not win item $t_1$
then $P_2=0$.
\end{lemma}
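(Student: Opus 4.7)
The plan is to compare the truthful outcome to a hypothetical deviation where agent $2$ misreports $\tilde S_2 = \{t_2\}$, which creates a no-competition instance, and then chain IC in both directions. Since agent $2$ does not win $t_1$ and Pareto-optimality forces every item that some agent values to be allocated, $t_1$ goes to agent $1$ in the truthful outcome.

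First I would analyze the deviation, where agent $1$ reports $(v_1, \{t_1\})$ and agent $2$ reports $(v_2, \{t_2\})$ --- disjoint interests. Pareto-optimality forces $t_2$ to go to agent $2$ for every $v_2 > 0$, since otherwise additionally assigning it at the same payments Pareto-dominates. A Myerson-style IC argument applied between two reports $v_2$ and $v_2'$ that induce the same allocation then shows that agent $2$'s deviation payment is a single constant $c_2$ over this range; if $c_2 > 0$ then reports $v_2 \in (0, c_2)$ would either violate BR (winning below the price) or Pareto-optimality (not winning despite being the sole interested bidder), so $c_2 \le 0$. The symmetric argument on agent $1$'s valuation gives $c_1 = P_1^{dev} \le 0$. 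Auctioneer-rationality $c_1 + c_2 \ge 0$ then pins both to $0$, so $P_2^{dev} = 0$.

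Next I would transfer this conclusion to the truthful scenario via bidirectional IC on the set of interest. Let $P_2$ and $M_2 \in \{0,1\}$ be agent $2$'s payment and allocation size in the truthful outcome. The forward direction (true $(v_2, \{t_1, t_2\})$ versus misreport $(v_2, \{t_2\})$) gives $M_2 v_2 - P_2 \ge v_2 - 0 = v_2$, hence $P_2 \le (M_2 - 1) v_2 \le 0$. The reverse direction (true $(v_2, \{t_2\})$, whose true value applies only to $t_2$, versus misreport $(v_2, \{t_1, t_2\})$, which reproduces our truthful outcome) gives $v_2 \ge M_2 v_2 - P_2$, hence $P_2 \ge (M_2 - 1) v_2$. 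If $M_2 = 1$ the bounds collapse to $P_2 = 0$; if $M_2 = 0$, Pareto-optimality forces $v_2 = 0$ (else leaving $t_2$ unallocated is Pareto-dominated by giving it to the only interested agent), and again $P_2 = 0$.

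The main obstacle is establishing $P_2^{dev} = 0$ in the no-competition deviation: bidder-rationality alone yields only an upper bound of $0$ on each agent's payment. The crucial step is combining these one-sided BR bounds across both agents with auctioneer-rationality, which pins them both to exactly $0$; this anchor then propagates back to the original truthful instance through the reverse direction of IC on the set of interest.
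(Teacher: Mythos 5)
Your proof is correct, but it takes a genuinely different route from the paper's. The paper stays entirely in the valuation coordinate: it first handles the boundary case $v_2=0$ (there Pareto-optimality sends $t_1$ to agent $1$, incentive compatibility forces $P_1=0$, and bidder rationality plus auctioneer rationality force $P_2=0$), and then argues that, by incentive compatibility, $P_2$ cannot depend on $v_2$ on the inputs where agent $2$ does not win $t_1$, so the payment stays pinned at $0$ for all $v_2$. You instead exploit the fact that in this section the sets of interest are also private: you let agent $2$ misreport $S_2=\{t_2\}$, analyze the resulting no-competition instance (payments constant over the winning range by IC, $\le 0$ by bidder rationality at arbitrarily small valuations, then exactly $0$ by auctioneer rationality summed over the two agents), and transfer the zero anchor back through IC applied in both directions on the set report, with Pareto-optimality forcing $M_2=1$ whenever $v_2>0$ so the two bounds collapse to $P_2=0$. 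Both arguments are sound, and the trade-off is this: the paper's version uses truthfulness only with respect to valuations, so its lemma holds even if the sets of interest were public, whereas your argument needs truthfulness with respect to the reported set of interest --- legitimate here, since Theorem \ref{thm:noprivateS} concerns private $S_a$, but strictly less general as a standalone statement. In exchange, your two-sided IC chain with the explicit no-competition anchor is more self-contained and more careful than the paper's terse ``$P_2$ cannot depend on $v_2$'' step, and your small-$\epsilon$ bidder-rationality argument avoids treating $v_2=0$ as an admissible type (the model assumes $v_a>0$), which the paper's own proof implicitly does; note also that under that assumption your subcase $M_2=0$ is vacuous by Pareto-optimality, so nothing is lost there.
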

\begin{proof}
First consider the case when $v_2=0$, and $v_1>0$. As in previous proof, any incentive compatibility and Pareto-optimality auction has to assign item $t_1$ to agent $1$. It follows from incentive compatibility that agent $1$ pays $P_1 = 0$, whereas it follows from bidder rationality and auctioneer rationality that $P_2=0$.

Now, consider the case when both agents have nonzero valuations. On every input when agent $2$ is not assigned item $t_1$, it must be that $P_2=0$, this follows since by incentive compatibility $P_2$ cannot depend on $v_2$.
\end{proof}


The lematta above allow us to argue about payment, but don't tell us
which matching is chosen. This is
done in the following lemma.

\begin{lemma}
\label{lem:lower-3}
If $b_1 < b_2$, $b_1 < v_2$, and $ v_1 \neq v_2$, then any incentive compatible, Pareto-optimal, bidder rational and auctioneer rational combinatorial auction has to assign both items to agent $2$.
\end{lemma}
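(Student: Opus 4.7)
The plan is to rule out every allocation except the one that assigns both items to agent~$2$, via a case analysis on the size of $v_1$. Only agent~$2$ is interested in $t_2$, so any Pareto-optimal allocation must assign $t_2$ to agent~$2$ --- otherwise, reassigning $t_2$ to agent~$2$ at unchanged payment would strictly improve agent~$2$'s utility without harming anyone. It therefore suffices to show that $t_1$ also goes to agent~$2$; I would proceed by contradiction, assuming agent~$1$ wins $t_1$, and note that by Lemma~\ref{lem:lower-2} this forces $P_2 = 0$.

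Case~1: $v_1 < \min(v_2, b_2)$. I would finish this case by exhibiting a strict Pareto improvement. The alternative $(M', P')$ reassigns both items to agent~$2$, refunds agent~$1$ by setting $P'_1 = P_1 - v_1$, and charges agent~$2$ the price $P'_2 = v_1$, which is feasible because $v_1 < b_2$. A routine check shows that agent~$1$'s utility remains $v_1 - P_1$ and the auctioneer's revenue remains $P_1$, while agent~$2$'s utility jumps from $v_2$ to $2v_2 - v_1$, a strict gain of $v_2 - v_1 > 0$. This contradicts Pareto-optimality, so agent~$1$ cannot win $t_1$ whenever $v_1 < \min(v_2, b_2)$.

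Case~2: $v_1 \ge \min(v_2, b_2)$ (with $v_1 \ne v_2$). Here the refund trick fails because agent~$2$ cannot absorb an extra payment of size $v_1$, so I would instead invoke truthfulness. Suppose agent~$1$ wins $t_1$ at valuation $v_1$ and pays $P_1^{\star} \le b_1$. Consider a hypothetical bidder~$1$ with true valuation $v_1' < \min(v_2, b_2)$ and $v_1' \ne v_2$: Case~1 already tells us this bidder earns zero utility by truthful reporting (and zero payment by Lemma~\ref{lem:lower-1}), but a deviation to report $v_1$ would give them $t_1$ at payment $P_1^{\star}$ and hence utility $v_1' - P_1^{\star}$. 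Incentive compatibility then requires $P_1^{\star} \ge v_1'$ for every admissible $v_1'$, so $P_1^{\star} \ge \min(v_2, b_2)$. But $b_1 < v_2$ and $b_1 < b_2$ together give $b_1 < \min(v_2, b_2) \le P_1^{\star}$, contradicting $P_1^{\star} \le b_1$.

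The hardest part of this plan is bridging the two regimes: Pareto optimality alone dispatches Case~1, but the natural Pareto improvement used there fails in Case~2 because the refund to agent~$1$ would demand more budget headroom than agent~$2$ possesses. The IC threshold argument is what propagates the Case~1 conclusion to all valuations, with the final contradiction riding on the combined inequality $b_1 < \min(v_2, b_2)$ baked into the hypotheses.
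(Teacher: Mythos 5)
Your proof is correct and follows essentially the same route as the paper's: a Pareto-improving trade (using Lemma~\ref{lem:lower-2} to get $P_2=0$) rules out agent~$1$ winning $t_1$ when $v_1$ is small, and an incentive-compatibility deviation argument exploiting $P_1\le b_1<\min(v_2,b_2)$ then extends the conclusion to all $v_1\neq v_2$. Your case boundary $v_1<\min(v_2,b_2)$ is in fact slightly more careful than the paper's ``$v_1\le b_2$'' (which tacitly also needs $v_1<v_2$ for the trade), but the underlying two-step argument is the same.
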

\begin{proof}
We want to show that independently of what agent $1$ says (but $v_1 \neq v_2$), agent $2$
will get both items.

We first concentrate on the case when $v_1 \le b_2$. Observe that the
only PO allocation assigns both items to
agent $2$. By Lemma~\ref{lem:lower-2}, if item $t_1$ was allocated to
agent $1$ then $P_2=0$. In this case
player $2$ can buy the item from $1$ and they are both better off.

Now, consider the case when $b_1 < v_1 < b_2$. By the above argument
player $1$ cannot be allocated item $t_1$. Suppose that for some value $v'_1 > b_1$ the allocation assigns item $t_1$ to agent $1$.
even though $v_2 > b_1$ and $b_2>b_1$. As agent $1$ is never charged more that her budget, $P_1 \leq b_1$. Then the utility for agent 1
is $v_1 - b_1 > 0$: agent 1 has incentive to lie about $v_1$, contradicting IC.

Hence, there is no value $v'_1 > b_1$ such that if agent 1 claims a valuation of $v'_1$ then the mechanism assigns $t_1$ to agent $1$.  This in turn implies that even if the truth is that $v_1 > b_2$, player $2$ must still be assigned both items $t_1$ and $t_2$.
\end{proof}

We are now ready to prove the main result of this section.

\begin{theorem} \label{thm:noprivateSapp}
There is no incentive compatible, Pareto-optimal, bidder rational and auctioneer rational combinatorial 
auction with public budgets, $b_a$, private valuations, $v_a$, and private sets of interest, $S_a$.
\end{theorem}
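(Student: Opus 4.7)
The plan is to derive a contradiction with Pareto-optimality by using the combinatorial rigidity forced by Lemma~\ref{lem:lower-3}. Assume for contradiction that a truthful, bidder-rational, auctioneer-rational, Pareto-optimal mechanism $A$ exists with public budgets, private valuations, and private sets of interest, and specialise to the two-agent, two-item setting used throughout this subsection, with $S_1 = \{t_1\}$ and $S_2 = \{t_1,t_2\}$. Fix public budgets and valuations with $b_1 < b_2$, $b_1 < v_2$, and $v_1 > v_2$; in particular $v_1 \ne v_2$, so every hypothesis of Lemma~\ref{lem:lower-3} is met. That lemma forces $A$ to assign both items to agent~$2$, and Lemma~\ref{lem:lower-1} then forces $P_1 = 0$; write $P_2$ for agent~$2$'s payment.

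The main step is to exhibit an explicit alternative $(M',P')$ that Pareto-dominates $A$'s output. Set $M' = \{(1,t_1),(2,t_2)\}$, $P_1' = v_1$, and $P_2' = P_2 - v_2$. Definition~\ref{def:pareto-optimal} places no sign, budget, or rationality restriction on $(M',P')$, so the alternative is admissible even though $P_2'$ may be negative or $P_1'$ may exceed $b_1$. Direct computation gives agent~$1$'s utility as $v_1 - v_1 = 0$, matching her original $0$; agent~$2$'s utility as $v_2 - (P_2 - v_2) = 2v_2 - P_2$, matching hers; and the auctioneer's revenue rises from $P_2$ to $v_1 + (P_2 - v_2) = P_2 + (v_1 - v_2)$, a strict increase since $v_1 > v_2$. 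Thus $(M',P')$ Pareto-dominates $A$'s output, contradicting Pareto-optimality.

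The point that requires care (and is the main obstacle) is that Lemma~\ref{lem:lower-3} genuinely forces the allocation ``both items to agent~$2$'' in the $v_1 > v_2$ regime, since its statement asks only for $v_1 \ne v_2$ but the trading argument in its proof uses an inequality between the two valuations. Inspecting the proof, the IC-monotonicity step needs the PO-uniqueness conclusion only at values $v_1 \le b_1$, where the hypothesis $b_1 < v_2$ automatically yields $v_1 < v_2$; that anchor then propagates to every $v_1' > b_1$ via the critical-value argument, ruling out any assignment of $t_1$ to agent~$1$ regardless of whether $v_1$ is larger or smaller than $v_2$. Once this observation is secured, the dominating alternative constructed above completes the impossibility proof.
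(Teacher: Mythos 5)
Your final step does not go through, and the failure is fatal to the whole approach. The alternative $(M',P')$ you construct charges agent $1$ the amount $P'_1=v_1>b_1$, i.e.\ it violates agent $1$'s budget constraint. The footnote you invoke cannot be read as dropping budget feasibility of the alternative allocation: an allocation in this paper's model must satisfy $P_a\le b_a$, and budget feasibility of the dominating allocation is precisely why condition 3 (leftover budget $b^*_{a_j}\ge v_{a_1}$) appears in the definition of a trading path and why the cap $\sum_{j\in J}b^*_{y_j}$ appears in the proof of Theorem \ref{thm:paths}; under your unrestricted reading, the ``if'' direction of Theorem \ref{thm:paths}, and hence Theorem \ref{thm:main}, would be false. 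Under the correct reading, the outcome you are attacking --- both items to agent $2$ with $P_1=0$, in the regime $b_1<v_2<v_1$ --- is in fact Pareto-optimal: the only candidate trading path is $(2,t_1,1)$, and it fails the budget condition because agent $1$'s leftover budget is $b_1<v_2$. Concretely, any budget-feasible reallocation giving $t_1$ to agent $1$ collects at most $b_1$ from her, while agent $2$ must be compensated by at least $v_2$ to keep her utility, so the auctioneer loses at least $v_2-b_1>0$. No contradiction with Pareto-optimality is obtained.

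A structural symptom of the problem: your argument never uses the privacy of the sets $S_a$ (Lemmas \ref{lem:lower-1} and \ref{lem:lower-3} apply to a mechanism facing the fixed, known sets $S_1=\{t_1\}$, $S_2=\{t_1,t_2\}$), so if it were correct it would prove impossibility even with public sets of interest, contradicting the paper's own mechanism. The theorem is genuinely about misreporting $S_a$, and the paper's proof is built around exhibiting such a deviation: with reported sets $S_1=S_2=\{t_1,t_2\}$ (the multi-unit case, $v_1=10$, $v_2=11$, $b_1=4$, $b_2=5$), the uniqueness result of Dobzinski, Lavi and Nisan (Theorem \ref{thm:dnl-uniqueness}) pins the outcome to the clinching allocation, in which agent $1$ wins an item at price $3$; whereas if agent $1$ truthfully reports $S_1=\{t_1\}$, Lemmas \ref{lem:lower-3} and \ref{lem:lower-1} force both items to agent $2$ and $P_1=0$, so her utility is $0$. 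Hence an agent whose true set of interest is $\{t_1\}$ gains (expected utility at least $2$) by reporting $\{t_1,t_2\}$, contradicting truthfulness. Your proposal omits this deviation argument entirely, and the Pareto-domination you substitute for it is invalid; your side remarks about extending Lemma \ref{lem:lower-3} to the regime $v_1>v_2$ do not rescue the main step.
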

\begin{proof}
Consider the case of two agents, $1$ and $2$, and two items $t_1$,
$t_2$. Let $S_{1} = \{t_1, t_2\}$ and $S_{2} = \{ t_1, t_2\}$.
Additionally, Fix $v_1 = 10$, $v_2=11$, $b_1 = 4$ and $b_2 =5$. In
this case, by Theorem~\ref{thm:dnl-uniqueness}, the allocation must
coincide with the result of the dynamic clinching auction of \cite{DBLP:conf/focs/DobzinskiLN08}.

{\sl I.e.}, both agents
get one of the two items, $p_1 = 3$, and $p_2 = 2$. Without loss of
generality
assume that item $t_1$ is assigned to agent $1$ with probability at
least $\frac{1}{2}$ (if the mechanism is randomized).

Now, assume that the true set of interest for agent 1  was in fact $S_{1} = \{t_1\}$. We argue that agent 1 now has incentive to lie about $S_1$:
\begin{itemize}
\item if agent $1$ reports her true set of interest -- then by
Lemma~\ref{lem:lower-3} both items end up assigned to agent $2$, and
by Lemma~\ref{lem:lower-1} $P_1 = 0$, so her
utility is zero as well;
\item if agent $1$ lies and reports $\{t_1, t_2\}$ as her set of
interest -- then with probability $\leq \frac{1}{2}$ her utility
is equal to $0-3$, and with probability at least $\frac{1}{2}$ her
utility is equal to $10-3=7$, so on average his utility is at least
$-3\cdot \frac{1}{2} + 7\cdot \frac{1}{2} = 2$.
\end{itemize}
This concludes the proof as agent $1$ has incentive to lie in any incentive compatible, Pareto-optimal, bidder rational and auctioneer rational combinatorial auction.
\end{proof}
\end{document}